\newcommand{\diff}{\mathop{}\mathopen{}\mathrm{d}}
\newcommand{\ind}[1]{\ensuremath{\mathbbm{1}_{\{#1\}}}}
\newcommand{\N}{\ensuremath{\mathbb{N}}}
\newcommand{\R}{\ensuremath{\mathbb{R}}}
\newcommand{\E}{\ensuremath{\mathbb{E}}}
\renewcommand{\P}{\ensuremath{\mathbb{P}}}
\newcommand\croc[1]{\left\langle #1\right\rangle}
\newtheorem{proposition}{Proposition}
\newtheorem{lemma}{Lemma}
\newtheorem{theorem}{Theorem}
\newcommand{\eps}{\varepsilon}
\newcommand{\cal}{\mathcal}
\title[Stochastic Model of Replication]{ Analysis of a Stochastic Model of Replication\\ in Large Distributed Storage Systems:\\A Mean-Field Approach}
\author[W. Sun]{Wen Sun}\thanks{The author's work is supported by a public grant overseen by the French National Research Agency (ANR) as part of the ``Investissements d'Avenir'' program (reference: ANR-10-LABX-0098).}
\email{Wen.Sun@inria.fr}
\author[V. Simon]{V\'eronique Simon}
\email{vsimon19@gmail.com }
\author[S. Monnet]{S\'ebastien Monnet}
\email{Sebastien.Monnet@univ-smb.fr} 
\address{LISTIC - Polytech Annecy-Chamb\'ery, BP 80439,  74944 Annecy le Vieux, France}
\author[Ph. Robert]{Philippe  Robert}
\email{Philippe.Robert@inria.fr}
\urladdr{http://team.inria.fr/rap/robert}
\address[Ph. Robert,W. Sun]{INRIA Paris, 2 rue Simone Iff, F-75012 Paris, France}
\author[P. Sens]{Pierre Sens}
\email{Pierre.Sens@lip6.fr} 
\address[V. Simon,P. Sens]{Sorbonne Universit\'es, UPMC Univ Paris 06, CNRS, INRIA, LIP6 UMR 7606,\\ 4 place Jussieu 75005 Paris.}
\begin{document}

\maketitle

\begin{abstract}
  Distributed storage systems such as Hadoop  File System or Google File System (GFS) ensure data availability and durability using replication. Persistence is achieved by replicating the same data block on several nodes, and ensuring that a minimum number of copies are available on the system at any time. Whenever the contents of a node are lost, for instance due to a hard disk crash, the system regenerates the data blocks stored before the failure by transferring them from the remaining replicas.  This paper is focused on the analysis of the efficiency of replication mechanism  that determines  the location of the copies of a given file at some server. The variability of the loads of the nodes of the network is investigated for several policies. Three  replication mechanisms are tested against simulations in the context of a  real implementation of a such a system: Random, Least Loaded and Power of Choice.

  The simulations show that some of these policies  may lead to quite unbalanced situations: if $\beta$ is the average number of copies per node it turns out that, at equilibrium, the load of  the nodes may exhibit a high variability. It is shown in this paper that a simple variant of a power of choice type algorithm has a striking effect on the loads of the nodes: at equilibrium, the distribution of the load of a node has a bounded support,  most of nodes have a load less than $2\beta$ which is an interesting property for the design of the storage space of these systems.

Mathematical models are introduced and investigated  to explain this  interesting phenomenon.  The analysis of these systems turns out to be quite complicated mainly because of the large dimensionality of the state spaces involved. Our study relies on probabilistic methods, mean-field analysis,  to analyze the asymptotic behavior of an arbitrary node of the network when the total number of nodes gets large. An additional ingredient is the use of stochastic calculus with marked Poisson point processes to establish some of our results. 

\end{abstract}

\section{Introduction}
For scalability, performance or for fault-tolerance concerns in distributed storage systems, the pieces of data are spread among many distributed nodes. Most famous distributed data stores include  Google File System (GFS)~\cite{Ghemawat03}, Hadoop Distributed File System (HDFS)~\cite{borthakur2008hdfs}, Cassandra~\cite{Lakshman:2010:CDS:1773912.1773922}, Dynamo~\cite{DeCandia:2007:DAH:1294261.1294281}, Bigtable~\cite{Chang:2006:BDS:1267308.1267323}, PAST~\cite{past} or DHASH~\cite{dhash}.  Most systems rely on data redistribution. Large amounts of data have to be stored in a distributed and reliable manner. They use a hash function in the case of distributed hash tables (DHTs)~\cite{past, dhash}.  As shown in previous studies, these systems imply many data movements and may lose data under churn~\cite{Legtchenko:2012:RCR:2240166.2240178}. Rodrigues and Blake have shown that classical DHTs storing large amounts are usable only if the node lifetime is of the order of several days~\cite{IPTPSRodriguesB04}.

Distributed data storage permits to enhance access performance by spreading the load among many nodes. It can also improve fault tolerance by maintaining multiple copies of each piece of data.  While implementing a distributed data store, many problems have to be tackled. For instance, it is necessary to efficiently locate a given piece of data: to balance the storage load evenly among nodes, to maintain consistency and the fault-tolerance level.  While consistency and fault-tolerance in replicated data stores are widely studied, the storage load balance received little attention despite its importance. The distribution of the storage load among the storing nodes is a critical issue. On a daily basis, new pieces of data have to be stored and when a failure occurs,  maintenance mechanisms are supposed to create and store new copies to replace the lost ones. A key feature of these systems is that the storage infrastructure itself is dynamic: nodes may crash and new nodes may be added. If the placement policy used does not balance the storage load evenly among nodes, the imbalance will become harmful. The overloaded nodes may have to serve many more requests than the other nodes, and in case of failure, the recovery procedure will take more time, increasing the probability to lose data.

Although it is not mentioned explicitly in the description of most of these systems,  the design of some parts of  DHT's  is reminiscent of   peer-to-peer systems architectures. But these are not the only framework where DHTs can be used.  One of the best examples of such a system is Cassandra~\cite{Lakshman}. It is a {\em fully centralized } DHT, with failure detection mechanisms comparable to the ones considered in this paper. See  the {\tt failure-detection} section of the corresponding web site	{\tt http://cassandra.apache.org/}.   It has been initially developed by Facebook and is now used by  companies such as GitHub, Instagram, Netflix, Reddit, eBay\ldots  The  placement strategies investigated in this paper  can therefore be used in various architectures, not only for peer-to-peer distributed hash-tables.

In this paper we study data placement policies avoiding data redistribution: once a piece of data is assigned to a node, it will remain on it until the node crashes. We focus specifically on the evaluation of the impact of several placement strategies on the storage load balance on a long term. To the best of our knowledge,  there are few papers devoted to the analysis of the evolution of the  storage load of the nodes of a DHT system  on such a long term period. Our investigation has been done in two complementary steps. 

\medskip
\noindent
{\em A simulation environment} of a real system  based on PeerSim~\cite{zzz_peersim}  is used to emulate several years of evolution of this system for three placement policies which are defined below: Random, Least Loaded and Power of Choice. See Figures~\ref{fig:load_evol_199} and~\ref{fig:load_distrib}.

\medskip
\noindent
    {\em
        Simplified mathematical models} are presented to analyze the Random and Power of Choice Policies. Mean-field results are obtained when the number $N$ of nodes gets large. It should be stressed that a number of aspects are not taken into account in the mathematical models: delays to copy files, network congestion due to duplication or losses of files, \ldots  See Section~\ref{SecMath} for the motivation and more details.

    We also consider only the steady state of these systems in our results, mainly for the sake of mathematical tractability. 
    These mathematical models appear nevertheless to explain some of the phenomena  concerning the load of the nodes observed in the simulations.

 The Least Loaded policy is, without a surprise, quite optimal, the load  of the nodes being almost constant in this case,  it varies only within some small set of values.  We show that, for a large network with an average load $\beta$ per node ,  if $\overline{X}_\beta^R$, [resp. $\overline{X}_\beta^P$] is the load of a random node at equilibrium for the Random policy [resp. Power of choice policy] then, for $x\geq 0$, 
\begin{align}\label{J1}
  \lim_{\beta\to+\infty} \P\left(\frac{\overline{X}_\beta^R}{\beta}\geq x\right)&=e^{-x},\notag\\
  \lim_{\beta\to+\infty} \P\left(\frac{\overline{X}_\beta^P}{\beta}\geq x\right)&=
  \begin{cases}
    1{-}{x}/{2}& \text{ if } x<2,\\
    0 & \text{ if } x\geq 2.
  \end{cases}
\end{align}
See Theorems~\ref{th1} and~\ref{ThPow} below.

The striking feature is that, for the Power of choice policy, the distribution of the load of a node has a {\em finite support} $[0,2\beta]$ for a large average load per node $\beta$. This is an important and desirable property  for the design of such systems, to dimension the storage of the nodes in particular. Note that this is not the case for the Random policy. Our simulations of a real system exhibit this surprising phenomenon, even for moderately large loads, see Figure~\ref{fig:load_distrib}. Another interesting  feature is the fact that, in the limit, the distribution of the load of a node is {\em uniform} on $[0,2\beta]$. It should be noted that the finite support feature is only an asymptotic property, for large $N$ and $\beta$, of the distribution of the load of a node. Additionally it does not imply, of course, that the maximum of the loads of the nodes is bounded.

Usually Power of choice policies used in computer science and communication networks are associated with $\log\log N$ loads instead of $\log N$ loads, see Mitzenmacher~\cite{Mitzenmacher}; or with double exponential decay for the tail distribution of the load at equilibrium, instead of an exponential decay, see Vvedenskaya et al.~\cite{Dobrushin}. Here the phenomenon is that  the number of files stored at a node is bounded in the limit, i.e. it has  a finite support,  instead of an exponential decay for the tail distribution of this variable.

The  mathematical analysis of these systems turns out to be quite complicated mainly because of the large dimensionality of the state spaces involved. Our study relies on probabilistic methods to analyze the asymptotic behavior of an arbitrary node of the network when the total number of nodes gets large. An additional ingredient is the use of stochastic calculus with marked Poisson point processes to establish some of our results.

The paper is organized as follows.  The main placement policies are introduced in Section~\ref{SecDef}. Section~\ref{SecSim} describes the simulation model and presents the results obtained with the simulator. Concerning mathematical models, the Random policy is analyzed in Section~\ref{SecRand} and Power of Choice policy in Section~\ref{SecPC}. All (quite) technical details of the proofs of the results for the Random policy are included. This is not the case for the Power of choice policy, for sake of simplicity and due to the much more complex framework of general mean-field results,  convergence results of the sample paths (Proposition~\ref{thP}) and of the invariant distributions (Proposition~\ref{ConvInvPow}) are stated without proof. A reference is provided. The complete proofs of the important convergence results~\eqref{J1} are provided.

\section{Placement policies}\label{SecDef}
To each data block is associated a \emph{root node}, a node having a copy of the block in charge of its duplication if necessary.  During the recovery process to replace a lost copy, the root node has to choose a new storage node within a dedicated set of nodes, {\em the selection range} of the node. Any node of this subset that does not already store a copy of the same data block may be chosen. The mechanism in charge of the failure of the root nodes is beyond the scope of this paper and the selection range is assumed to be the set of all nodes. Three policies of placement are defined below when the root node of a data block has to copy it on another node. 

\subsection*{Least Loaded Policy} For this algorithm the root node  selects the least loaded node of its selection range not already storing a copy of the same data block. This strategy clearly aims at reducing the variation of storage loads among nodes.
As it has been seen in earlier studies, this policy has a bad impact on the system reliability, see~\cite{splad}. Indeed, a node having a small storage load will be chosen by all its neighbors in the ring. Furthermore, this policy implies for a root node to monitor the load of all  nodes, which may be costly. It is nevertheless in terms of placement an optimal policy. It is used in this paper as a reference for comparison with the other policies. 

\subsection*{Random Policy} The root node chooses uniformly at random a new storage node  among nodes not already hosting a copy of the same data block.

\subsection*{Power of Choice Policy}
For this algorithm,  the root node chooses, uniformly at random, two nodes not storing a copy of the  data block.  It selects the least loaded among the two.

It is inspired by algorithms studied by Mitzenmacher and others in the context of static allocation schemes of balls into bins in computer science, see~\cite{Mitzenmacher} for a survey. In queueing theory, a similar algorithm has been investigated in  the seminal work of  Vvedenskaya et al.~\cite{Dobrushin} in 1996. There is a huge literature on these algorithms in this context. Our framework is quite different, the placement is dynamic, data blocks have to move because of crashes, and the number of files is constant in the system contrary to  open queueing models. The idea is nevertheless the same: reducing the load by just checking some finite subset of nodes instead of all of them. In fact the common version of this algorithm consists in taking $k$ nodes at random and choosing the least loaded node, this is the power of $k$ choices algorithm. For simplicity, we have chosen to refer to the algorithm as ``power of choice'' instead of the more accurate ``power of two choices''.

Essentially,  Random is the policy used for the two main classes of DHT architectures: Past and Chord. It does not use any information on the states of the nodes and has therefore a low overhead from this point of view. Using more detailed information may prove to be useful but will involve more messages between nodes and, therefore, will have a  cost in terms of overhead.  The least loaded policy, for example, has a high overhead since a node has to know the states of all nodes to allocate copies. This is why we compare this "optimal policy"  with a policy like power of choice which has a limited overhead but interesting performances. 

\section{Simulations}\label{SecSim}
Our simulator is based on PeerSim~\cite{zzz_peersim}, see also~\cite{p2p09-peersim}. It simulates a real distributed storage system. Every node, every piece of data, and every transfer is represented. Each piece of data is replicated and each copy is assigned to a different storage node. We describe briefly the failure detection mechanism used. 
In  classical systems, like Microsoft FeePastry/PAST implementation on a distributed infrastructure, see~\cite{past},  the routing layer frequently exchanges many messages. Thus, on each node, the neighbor lists  are updated very frequently. At storage level, the neighbor lists can be consulted to check the presence of the neighbors, and thus to detect node failures. The duration of time  between consecutive  maintenance checkings is an order of magnitude longer than the checkings on the routing layer. It is the way PAST detects nodes that join or leave the storage system in practice.

  In our simulator, for performance purposes, we did not simulate each message exchange at the routing layer level. When a node fails it is labeled as "failed" and its neighbors will consider it as failed at their next periodical maintenance. The maintenance at node ${\cal N}$  consists in 
  \begin{itemize}
  \item[(i)] checking the presence of all nodes storing data blocks for which ${\cal N}$  is the root. In the case of  faults, node ${\cal N}$  starts then creating, for each lost data-block, a new copy using a remaining one (and selecting a new storage node according to the chosen strategy).
  \item[(ii)] Checking the presence of all nodes being root for data blocks stored by ${\cal N}$. In the case of   faults, node ${\cal N}$  computes the identity of the new root for this data block. It sends a message to the new root node to notify it of its new role. The information of this change of root node for this data block is also sent to the nodes having a copy of it. 
  \end{itemize}
  The detailed algorithms and  description of the associated meta-data can be found in~\cite{Legtchenko:2012:RCR:2240166.2240178}.
  
\paragraph{System model}
\noindent We have simulated $N$ nodes, storing $F_N^*$ data blocks with a fixed size $s$ and replicated $d$ times. The nodes and the data blocks are assigned unique identifiers ($id$). The nodes are organized according to their identifiers, forming a virtual ring, as it is usual in distributed hash tables (DHTs)~\cite{past,dhash}. To each data block is associated a \emph{root node}, a node having a copy of the block in charge of its duplication if necessary. See below.

\paragraph{ Failure model}
\noindent  Failures in the systems are assumed to occur according to a Poisson process with a fixed mean of seven days.  The failures are crashes: a node acts correctly until it fails. After a crash it stops and never comes back again (fail-stop model). All the copies stored become unavailable at that time.  To maintain the number of nodes constant equal to $N$, each time a node fails, an empty node with a new $id$ joins the system in a random position in the ring of nodes.

The Poisson assumption to represent the successive failures of servers may not be completely accurate but given the large number of nodes and that the failures occur independently, the Poissonnian nature of the number of failures in a given time interval can be seen as a consequence of the law of small numbers (during some time interval each server fails with a small probability, independently of the other servers). See Pinheiro et al.~\cite{Goog}. The assumption that the number of nodes is constant is made for convenience  so that the average load per node remains constant. This is not the case in practice but the fluctuations are nevertheless not really significant. See~\cite{Goog} and the beginning of Section~\ref{SecMath}.

\paragraph{Simulation parameters}
In the simulations, based on PeerSim, the parameters have been fixed as follows:
\begin{itemize}
\item[---] The number of nodes $N{=}200$,
\item[---] the number of data blocks $F_N^*{=}10 000$,
\item[---] the block size $s{=}10$MB,
\item[---] the replication degree of data blocks $d{=}3$,
\item[---] the mean time between failures (MTBF) is $7$ days.
\end{itemize}
The network latency is fixed to $0.1s$ and the bandwidth is $5.5$Mbps. 

At the beginning of each simulation, the $F_N^*$ blocks and their copies are placed using the corresponding policy and the system is simulated for a period of $2$ years. We have studied the storage load distribution and its time evolution. With these parameters, the average load  is $\beta{=}d{\times}F_N^*/N{=}150$ blocks per node. The optimal placement from the point of view of load balancing would consist of having $150$ blocks at every node. We will investigate the deviation from this scenario for the three policies. 

\paragraph{Network simulation}
\noindent The impact of policies on bandwidth management has been carefully monitored. In case of failure, many data blocks have to be transferred among a subset of nodes to repair the system.  Taking into account bandwidth limitation and network contention is crucial since a delayed recovery may lead to the loss of additional blocks because of additional crashes in the meantime. 

\subsection{Simulation results}

Figure~\ref{fig:load_evol_199} shows the evolution of the average load of a node with respect to the duration of its lifetime within the network. One can conclude that:
\begin{itemize}
\item[---] For the \emph{Least Loaded} strategy, the load remains almost constant and equal to the optimal value $150$ .  By systematically choosing the least loaded node  to store a data block copy, the storage load tends to be constant among nodes.

  As observed in simulations, this policy has however  two main drawbacks. First, it requires that nodes maintain an up-to-date knowledge of the load of all the nodes.  Second, it is more likely  to generate network contention for the following reason: If one of the nodes is really underloaded, it will receive most of the transfer requests of its neighborhood. See~\cite{splad}.
  \item[---] For the {\em Random} strategy, the load increases linearly until the failure of the node.This is an undesired feature since it implies that the failure of ``old'' nodes will imply in this case a lot of transfers to recover the large number of lost blocks. 
\item[---] The growth of the {\em Power of Choice} policy is slow as it can be seen from the figure. It should be noted that, contrary to the Least Loaded Policy, the required information to allocate data blocks is  limited.  Indeed, its cost is only of a communication with each of the two  nodes chosen.  Furthermore, the random choice of  nodes for the allocation of  copies of files has the advantage of spreading the load from the point of view of network contention. 
\end{itemize}
\begin{figure}[ht]
\center
\includegraphics[width=0.7\textwidth]{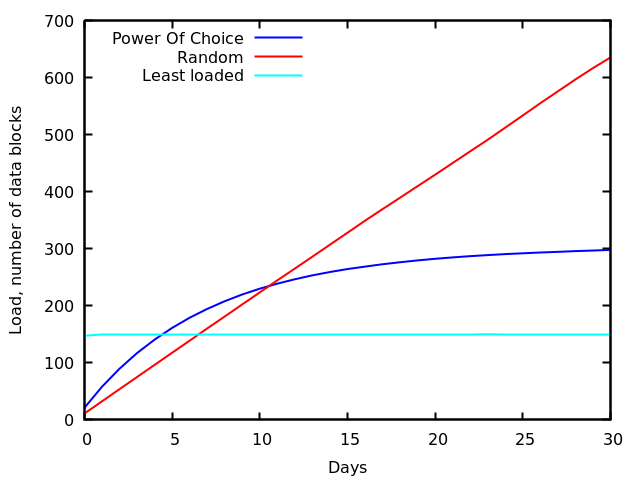}
\caption{Evolution of the Average Load of a Node with Respect to its Age in the System.}
\label{fig:load_evol_199}
\end{figure}

The distribution function  of the storage loads after two simulated years is presented in Figure~\ref{fig:load_distrib}. For clarity, the figure has been truncated.   Each point of each policy has been obtained with $210$ runs.  At the beginning,  the data block copies are placed using the corresponding strategy.  After two years of failures and reparations,  one gets that:
\begin{itemize}
\item[---] The {\em Random} strategy  presents a highly non-uniform distribution profile, note that more $10\%$ of the nodes have a loaded greater than $350$. This is consistent with our previous remark on the fact that old nodes are  overloaded.
\item[---] For the \emph{Least Loaded} strategy, as expected, the load is highly concentrated around $150$. 
\item[---] The striking feature concerning the {\em Power of Choice} policy  is that the load of a node seems to a uniform distribution between $0$ and $300$. In particular almost all nodes have a load bounded by $300$ which is absolutely remarkable. 
\end{itemize}
Table~\ref{tab:maxload} gives the maximum loads that have been observed for each strategy over $132{,}090$ samples: starting from day $100$, the maximal load has been measured and recorded every day, this for the $210$ runs.  We can see that the mean maximum load of the random strategy is already high (more than five times the average), and furthermore, the load varies a lot, the maximum measured load being $2188$ data blocks. This implies that, with the random strategy, the storage device for each node has to be  over-sized, recall that the average load is $150$ data blocks. 

\begin{figure}[ht]
\center
\includegraphics[width=0.7\textwidth]{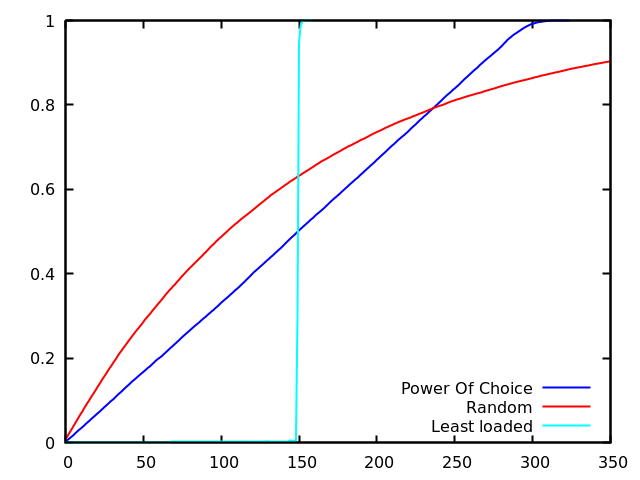}
\caption{Distribution function of load distribution of a random node after 729 days.}
\label{fig:load_distrib}
\end{figure}

\begin{table}[ht]
\begin{center}
\begin{tabular}{|l|c|c|c|}
\hline
Strategy & Mean of Max & Min & Max \\
\hline
{\em Least Loaded} & 153 & 150 & 165 \\
{\em Random} & 864 & 465 & 2188 \\
{\em Power of Choice} &  300 & 269 & 328 \\
\hline
\end{tabular}
\end{center}
\vspace{2mm}
\caption{Statistics of Maximal Loads}
\label{tab:maxload}
\end{table}

As a conclusion, the simulations show that, with a limited cost in terms of complexity,  the power of choice policy has remarkable properties. The load of each node is bounded by $300$. It may be remarked that each possible load between $0$ and $300$ is represented by the same amount of nodes on average.  Figure~\ref{fig:load_distrib} shows that there is approximately the same number of nodes having $0$ data blocks, than nodes having $150$ data block or nodes having $300$ data blocks.  Note that this is a stationary state.  Additionally, the variation is low, we can observe in Table~\ref{tab:maxload} that among the $132{,}090$ samples, the most loaded node was never above 328.  From a practical point of view, it means that a slightly oversized storage device at each node (a bit more than twice the average) is enough to guarantee the durability of the system.

In the following sections we investigate simplified mathematical models of two placement policies: Random and Power of Choice. The goal is to explain these striking qualitative properties of these policies.

\section{Mathematical Models}\label{SecMath}
The main goal of the paper is to investigate the performance of duplication algorithms in terms of the overhead for the loads of the nodes of the network. Without loss of generality, we will assume that the breakdown of each server occurs according to a Poisson process with rate $1$. After a breakdown, a server restarts empty (in fact a new server replaces it).  The replication  degree of data blocks is $d{\geq}2$, each data block has at most $d$ copies. A complete Markovian description of such a system is quite complex. Indeed,  if there are $N$ servers and $F_N^*$ initial data blocks,  for $1\leq i\leq F_N^*$, the locations of the $i$th data block are given by the indices of $k$ distinct servers if there are $k{\leq}d$ copies of this data block. Consequently the size of the state space of the Markov process is at least of the order of $(N^d)^{F_N^*}$ which is huge if it is remembered that $F_N^*$ is of the order of $N$. For this reason, we shall simplify the mathematical model. 

\paragraph{Assumption on Duplication Rate}
\noindent In order to focus specifically on the efficiency of the replacement strategy from the point of view of the distribution of the load of an arbitrary node,  we will study the system with the assumption that it does not lose files. We will only track the location of the  node of each copy of a data block with a simplifying assumption: just before  a node fails, all the copies it contains are allocated to the other nodes with respect to the algorithm of placement investigated. In this way, every data block has always $d$ copies in the system.

Note that this system is like the original one by assuming that the time to make a new copy is null. Once a server  has failed,  a copy of each of the data blocks it contains is produced immediately with one of the copies in the network. With this model, a copy could be made on the same node as the other copy, but this occurs with probability $1/(N{-}1)$, it can be shown that this has a negligible effect at the level of the network,  in the same way as in Proposition~\ref{techprop} below.  This approximation is intuitively reasonable to describe the original evolution of the system when few data blocks are lost. As we will see, qualitatively, its behavior is close to the observed simulations of the real system, few files were lost after two years. 

Now $F_N=dF_N^*$ denotes the total number of copies of files, it is assumed that, for some $\beta>0$,
\[
\lim_{N\to+\infty} \frac{F_N}{N}=\beta.
\]
$\beta$ is therefore the average load per server.  With these assumptions, the replication factor does not play a role, it is as if there were $F_N$ distinct files and once a node breaks down, any file present on this node is immediately copied to another node according to the policy used. 

If the initial state of the system is $(L_{i}^N(0))$, where $L_{i}^N(0)$ is the number of files at node $i$ initially, 
throughout the paper, it is assumed that that the distribution of the variables $(L_{i}^N(0))$  are invariant by any permutation of indices, i.e. it is an exchangeable vector, and  that
\begin{equation}\label{IniCond}
\sup_{N\geq 1} \E\left(L_1^N(0)^2\right)<+\infty. 
\end{equation}
Note that this condition is satisfied if we start with an optimal exchangeable allocation, i.e. for which, for all $1{\leq}i{\leq}N$, $$L_i^N(0)\in\{k_N{-}1, k_N\} \text{  with } k_N{=}\left\lceil {F_N}/{N}\right\rceil,$$
where, for $x{\geq} 0$, $\lceil x\rceil{\in}\N$ and ${\lceil} x\rceil{-}1{\leq} x{<}\lceil x\rceil$. 

\subsection{The Random Allocation}\label{SecRand}
For $1\leq i\leq N$, we denote by $\overline{{\cal N}}_{i}=(t_n^i,\overline{U}^{i}_n)$ the marked Poisson point process defined as follows:

\noindent
\begin{itemize}
\item[---] $(t_n^i)$ is a Poisson process on $\R_+$ with rate $1$;
\item[---] $\overline{U}^{i}_n{=}(U^{i,n}_p)$ is an i.i.d. sequence of uniform random variables on the subset $\{1,\ldots,N\}{\setminus}\{i\}$.
\end{itemize}
For $1\leq i\leq N$ and $n\geq 1$,  $t_n^i$ is the instant of the $n$th breakdown of server $i$.  For $p\geq 1$,  $U^{i,n}_p$ is the server where the $p$th copy present on node $i$ is allocated after this breakdown.  The random variables $\overline{{\cal N}}_{i}$, $1{\leq}i{\leq} N$ are assumed to be independent.  Concerning marked Poisson point processes, see Kingman~\cite{Kingman} for example. 

One will use an integral representation for these processes,  if ${\cal M}_U{=}\{1,\ldots,N\}^\N$ and $f:\R_+{\times}{\cal M}_U\to\R_+$,
\[
\sum_{n\geq 1} f(t_n^i,(U^{i,n}_p))=\int_{t=0}^{+\infty}\int_{\overline{u}=(u_p)\in{\cal M}_U} f(t,\overline{u})\overline{\cal N}_{i}(\diff t,\diff \overline{u}).
\]

\paragraph{Equations of Evolution} 
\noindent
For $1{\leq}i{\leq}N$ and $t{\geq}0$, $L_i^N(t)$ is the number of copies on server $i$ at time $t$.  
The dynamics of the random allocation algorithm is represented by the following stochastic differential equation, for $1{\leq} i{\leq} N$,
\begin{multline}\label{eqr}
\diff L_i^N(t)\stackrel{\text{def.}}{ =} L_i^N(t){-}L_i^N(t{-})= {-}L_i^N(t{-}) \overline{{\cal N}}_{i}(\diff t,{\cal M}_U )\\ +\sum_{m\not=i} \int_{{\cal M}_U}z_i\left(L_m^N(t{-}),\overline{u}\right) \overline{{\cal N}}_{m}(\diff t,\diff \overline{u})
\end{multline}
where $z_i:\N{\times}{\cal M}_U\mapsto \N$ is the function 
\begin{equation}
z_{i}(\ell, \overline{u})=\ind{u_1=i}+\ind{u_2=i}+\cdots+\ind{u_\ell=i}.
\end{equation}
The first term of the right hand side of Relation~\eqref{eqr} corresponds to a breakdown of node~$i$, all files are removed from the node. The second concerns the files added to node $i$ when  other servers break down and send  copies to node $i$. Note that the $i$th term of the sum is always $0$.

Denote $$L^N(t){=}\left(L_i^N(t),1\leq i\leq N\right)\in \N^N,$$ then clearly $(L^N(t))$ is a Markov process. Note that, because of the symmetry of the initial state and of the dynamics of the system, the variables $L_i^N(t)$ have the same distribution and since the sum of these variables is $F_N$, one has in particular $$\E\left(L_i^N(t)\right)={F_N}/{N},$$ for all $N\geq 2$, $1{\leq} i{\leq} N$ and $t\geq 0$.

The integrand in the second term of the right hand side of Equation~\eqref{eqr} has a binomial distribution with parameters $L_m^N(t)$ and $1/(N{-}1)$ and the sum of these terms is $F_N/(N{-}1)$ which is converging to $\beta$. Hence, this suggests, via an extension of the law of small numbers, that this second term could be a Poisson process with rate $\beta$. The process $(L_1^N(t))$ should be in the limit, a jump process with a Poissonnian input and return to $0$ at rate $1$.  This is what we are going to prove now. 

By integrating Equation~\eqref{eqr} one gets the relation
\begin{equation}\label{eq3}
L_1^N(t)=L_1^N(0)-\int_0^tL_1^N(s)\,\diff s+\frac{1}{N{-}1}\sum_{m=2}^N\int_0^t L_m^N(s)\,\diff s+M_1^N(t),
\end{equation}
where $(M_1^N(t))$ is the martingale
\begin{multline*}
M_1^N(t)=
-\bigintsss_0^t L_1^N(s{-})\left[\rule{0mm}{4mm} \overline{{\cal N}}_{1}(\diff s,{\cal M}_U ){-}\diff s\right]\\ +\sum_{m=2}^N \bigintsss_0^t \left[\int_{{\cal M}_U} z_1\left(L_m^N(s{-}),\overline{u}\right) \overline{{\cal N}}_{m}(\diff s,\diff \overline{u}){-}\frac{ L_m^N(s)}{N{-}1}\diff s\right].
\end{multline*}

The following proposition shows that the process $(L_1^N(t))$  does not have jumps of size $\geq 2$ on a finite time interval with high probability.
\begin{proposition}\label{Q=L}
  For $T>0$ then 
  \[
  \lim_{N\to+\infty} \P\left( \sup_{t\in(0,T]} \diff L_1^N(t)\geq 2 \right)=0.
  \]
\end{proposition}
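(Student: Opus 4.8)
The plan is to show that a jump of size ${\ge}2$ of $(L_1^N(t))$ on $(0,T]$ can only be caused by one kind of rare event — some node $m\neq1$ breaking down and reassigning at least two of its copies to node~$1$ — and that the expected number of such events on $[0,T]$ tends to $0$. First I would reduce to a counting problem. Almost surely no two of the Poisson processes $(t_n^m)$, $1\le m\le N$, share a point, so at any given time at most one server breaks down; a breakdown of node~$1$ gives the jump $\diff L_1^N(t)={-}L_1^N(t{-})\le0$, never ${\ge}2$, while a breakdown of $m\neq1$ at $t=t_n^m$ gives $\diff L_1^N(t)=z_1(L_m^N(t{-}),\overline U^{m,n})$. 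Hence, from~\eqref{eqr},
\[
\P\Bigl(\sup_{t\in(0,T]}\diff L_1^N(t)\ge2\Bigr)\le\E\Bigl(\sum_{m=2}^N\int_0^T\!\!\int_{{\cal M}_U}\ind{z_1(L_m^N(s{-}),\overline u)\ge2}\,\overline{\cal N}_m(\diff s,\diff\overline u)\Bigr).
\]
Applying the compensation formula for the $\overline{\cal N}_m$ (the marks being independent of the past) turns the right-hand side into $\sum_{m=2}^N\int_0^T\E(h_N(L_m^N(s)))\diff s$, where $h_N(\ell)=\P(\mathrm{Bin}(\ell,1/(N{-}1))\ge2)\le\binom{\ell}{2}(N{-}1)^{-2}\le\ell^2/(2(N{-}1)^2)$, since conditionally on $L_m^N(s)=\ell$ the variable $z_1(\ell,\overline U)$ is binomial with parameters $\ell$ and $1/(N{-}1)$. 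Using the exchangeability of $(L_i^N(s))$ to replace $\E(L_m^N(s)^2)$ by $g_N(s):=\E(L_1^N(s)^2)$, this yields
\[
\P\Bigl(\sup_{t\in(0,T]}\diff L_1^N(t)\ge2\Bigr)\le\frac{T}{2(N{-}1)}\,\sup_{s\le T}g_N(s),
\]
so it suffices to bound $\sup_N\sup_{s\le T}g_N(s)$.

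To obtain this a priori estimate I would apply Dynkin's formula to $(L_1^N(t)^2)$ with the dynamics~\eqref{eqr}: a breakdown of node~$1$ changes $L_1^N(t)^2$ by ${-}L_1^N(t{-})^2$, a breakdown of $m\neq1$ by $2L_1^N(t{-})B+B^2$ with $B=z_1(L_m^N(t{-}),\overline u)$. Taking expectations with the compensator and using $\sum_iL_i^N(t)=F_N$ together with $\E(L_1^N(t))=F_N/N$ to close the expression (exchangeability giving $\E(\sum_{m\ne1}L_m^N(t)^2)=(N{-}1)g_N(t)$), I expect to arrive at
\[
g_N'(t)=-\Bigl(1+\tfrac{1}{N-1}\Bigr)g_N(t)+\frac{2F_N^2}{N(N{-}1)}+\frac{F_N}{N}\Bigl(1-\tfrac{1}{N-1}\Bigr).
\]
All terms are finite for fixed $N$ because $0\le L_1^N(t)\le F_N$, so the martingale parts genuinely have zero mean. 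Since $F_N/N\to\beta$, the additive term is bounded by some constant $C<\infty$ uniformly in $N$, and the coefficient of $g_N(t)$ is $\le-1$, so Grönwall's lemma gives $g_N(t)\le g_N(0)+C$ for every $t\ge0$; combined with $\sup_N\E(L_1^N(0)^2)<\infty$ from~\eqref{IniCond} this produces $K:=\sup_N\sup_{s\le T}g_N(s)<\infty$. Plugging $K$ into the previous bound gives $\P(\sup_{t\in(0,T]}\diff L_1^N(t)\ge2)\le TK/(2(N{-}1))\to0$, which is the claim.

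The only delicate point is this uniform second-moment bound: the deterministic inequality $L_1^N\le F_N$ is much too crude here (it would give an estimate of order $N$ in the first step), so one really needs the closed linear ODE for $g_N$, whose derivation rests on exchangeability and on the conservation of the total number of copies. Everything else — the reduction to a binomial tail, the use of the compensation formula, and the Grönwall argument — is routine.
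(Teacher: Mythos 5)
Your proof is correct, and the underlying heuristic is the one the paper uses: each of the roughly $NT$ breakdowns sends at least two copies to node $1$ with probability $O(1/N^2)$ by a binomial tail bound, so the expected number of such events vanishes. The technical route differs, however. The paper first truncates to the high-probability event ${\cal E}_N=\{\sup_{t\le T}L_1^N(t)\le K\}$, obtained from the maximal first-moment estimate~\eqref{eq2} of Lemma~\ref{techprop}, and then multiplies the per-failure bound $(K/N)^2$ by the Poisson number of failures. You instead bound the probability by the expected number of bad marks via the compensation formula and close the estimate with the uniform second-moment bound $\sup_N\sup_{t\le T}\E\bigl(L_1^N(t)^2\bigr)<+\infty$, which is Relation~\eqref{eq1} of the same lemma; you re-derive it through a closed linear ODE for $g_N(t)=\E(L_1^N(t)^2)$ that agrees with the second-moment computation in the Appendix (the paper settles for a cruder Gronwall bound there, which would also suffice for your purposes). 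Your version has two small advantages: it avoids the truncation step entirely, and, by working in expectation with exchangeability, it makes explicit that the relevant quantity is the load $L_m^N(t-)$ of the \emph{failing} node $m\ne 1$ rather than $L_1^N$ itself --- a point the paper's event ${\cal E}_N$ only covers implicitly through exchangeability. The price is needing a second-moment rather than a first-moment estimate, but that is available under Condition~\eqref{IniCond} exactly as you derive it.
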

\begin{proof}
  For $0{<}\eps{<}1$, from Equation~\eqref{eq2} in the Appendix, one obtains that there exists some constant  $K{>}0$ such that $\P({\cal E}_N){\geq}1{-}\eps$  holds for all $N{\geq}2$,  if
  \[
    {\cal E}_N=\left\{\sup\left( L_1^N(t): 0\leq t\leq T\right)\leq K\right\}.
    \]
On the event ${\cal E}_N$ the probability that a failure of some node will send more than $2$ new copies to node $1$ is upper bounded by $(K/N)^2$. Since the total number of failures on the time interval $[0,T]$ affecting node $1$ has a Poisson distribution with parameter $N{-}1$, one obtains that the probability that $(L_1^N(t))$ has a jump of size at least $2$ on $[0,T]$ is bounded by $K^2/N$ hence goes to $0$ as $N$ gets large.  The proposition is proved.
\end{proof}

\paragraph{Convergence to a Simple Jump Process}
\noindent Define
\[
{\cal P}^N[0,t]= \int_0^t \sum_{m=2}^N\int_{{\cal M}_U} z_i\left(L_m^N(s{-}),\overline{u}\right) \overline{{\cal N}}_{m}(\diff s,\diff \overline{u}),
\]
this is a counting process with jumps of size $1$. Define
\[
C_P^N(t)=\frac{1}{N-1}\sum_{m=2}^N \int_0^t  L_m^N(s)\,\diff s,
\]
then $(C_P^N(t))$ is the compensator of $({\cal P}^N[0,t])$ in the sense that it is a previsible process and that $$\left({\cal P}^N[0,t]{-}C_P^N(t)\right)$$ is a martingale. The proof is analogous to the proof of Proposition~\ref{propinc} in the Appendix. 
\begin{proposition}\label{PoisComp}
If the initial distribution of $(L_i^N(t))$  satisfies Condition~\eqref{IniCond} then,  for the convergence in distribution of processes,
  \[
  \lim_{N\to+\infty}(C_P^N(t))=(\beta t).
  \]
\end{proposition}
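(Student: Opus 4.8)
The plan is to exploit the conservation of the total number of copies. Since a breakdown of a node instantaneously moves all of its copies onto other nodes, one has $\sum_{m=1}^N L_m^N(s)=F_N$ for every $s\ge0$, so that
\[
C_P^N(t)=\frac{1}{N-1}\sum_{m=2}^N\int_0^t L_m^N(s)\,\diff s=\frac{F_N}{N-1}\,t-\frac{1}{N-1}\int_0^t L_1^N(s)\,\diff s.
\]
The first term converges to $\beta t$ deterministically and uniformly on any compact set, because $F_N/(N-1)\to\beta$. It thus remains only to show that the second term vanishes uniformly on compacts, in probability.

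Since $L_1^N(s)\ge0$, the map $t\mapsto \frac{1}{N-1}\int_0^t L_1^N(s)\,\diff s$ is non-decreasing, so for any $T>0$ its supremum over $[0,T]$ is attained at $t=T$. By the exchangeability of $(L_i^N(0))_i$ --- which, together with the symmetry of the dynamics (the driving marked Poisson processes are i.i.d.\ across nodes and the uniform reallocation is invariant under relabelling), is preserved at every time $t$ --- and the conservation law, one has $\E(L_1^N(s))=F_N/N$ for all $s$, a quantity which is finite by Condition~\eqref{IniCond}. Hence, by Tonelli's theorem,
\[
\E\left(\frac{1}{N-1}\int_0^T L_1^N(s)\,\diff s\right)=\frac{F_N}{N(N-1)}\,T\longrightarrow 0\quad\text{as }N\to\infty,
\]
so this non-negative random variable converges to $0$ in $L^1$, hence in probability. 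Combining with the previous paragraph, $\sup_{0\le t\le T}|C_P^N(t)-\beta t|\to0$ in probability for every $T$, and since the limiting process $(\beta t)$ is deterministic this gives the announced convergence in distribution of processes.

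There is essentially no obstacle in this argument: everything reduces to the conservation law, the elementary first-moment identity $\E(L_1^N(s))=F_N/N$, and the monotonicity of the integral. The one point deserving attention is precisely the assertion that the law of $(L_i^N(t))_i$ remains exchangeable for all $t$, which is what legitimises that identity; this is already invoked in the text and follows from the symmetry of the initial data and of the driving noise. It is worth noting that only the integrability of $L_1^N(0)$ is used here, not the full second-moment bound in~\eqref{IniCond}, the latter being needed instead for the a priori estimates behind Proposition~\ref{Q=L} and the martingale computations.
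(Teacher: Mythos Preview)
Your argument is correct and is in fact more elementary than the paper's. The paper proceeds in two steps: it first establishes tightness of $(C_P^N)$ via the modulus of continuity (using the bound $|C_P^N(t)-C_P^N(s)|\le \frac{F_N}{N-1}(t-s)$), and then shows convergence of the one-dimensional marginals in $L^2$ by computing $\E(C_P^N(t)^2)$ explicitly (Equation~\eqref{eqc2}) and invoking Lemma~\ref{techprop}, which in turn relies on the second-moment hypothesis~\eqref{IniCond} together with a Gronwall argument. Your decomposition $C_P^N(t)=\frac{F_N}{N-1}t-\frac{1}{N-1}\int_0^t L_1^N(s)\,\diff s$ short-circuits both steps: the deterministic part converges uniformly to $\beta t$, and the correction is a non-decreasing process whose value at $T$ has $L^1$ norm $F_NT/(N(N-1))\to 0$, so the sup-norm convergence follows immediately without any tightness argument or second-moment estimate. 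As you observe, this only uses exchangeability and the conservation law (indeed, since $0\le L_1^N(s)\le F_N$ deterministically, even integrability of the initial condition is automatic). What the paper's second-moment route buys is that it sets up exactly the estimates (Lemma~\ref{techprop} and Equation~\eqref{eqc2}) that are reused later to handle the non-exchangeable initial law $\widehat{\pi}_\beta^N$ in Proposition~\ref{LemW2}, but for the present proposition your direct approach is cleaner.
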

\begin{proof}
We first prove that the sequence $(C_P^N(t))$ is tight for the convergence in distribution with the topology of the uniform norm. 
By using that the sum of the $L_m$'s  equals $F_N$, for $0\leq s\leq t\leq T$,

\[
|C_P^N(t)-C_P^N(s)|=\frac{1}{N{-}1}\int_s^t \sum_{m=2}^N L_m^N(w)\,\diff w \leq \frac{F_N}{N{-}1}(t-s).
\]
Hence for any $\eta{>}0$ and $\eps{>}0$, there exists some $\delta>0$ such that, for all $N\geq 1$,
\[
\P\left(\sup_{\substack{0\leq s,t\leq T\\ |t-s|\leq \delta}}\left|C_P^N(t)-C_P^N(s)\right|\geq \eta \right)\leq \eps. 
\]
The sequence $(C_P^N(t))$ satisfies the criterion of the modulus of continuity, see Theorem 7.2 page~81 of Billingsley~\cite{Billingsley}. The property of tightness has been proved. Furthermore any limiting point  corresponds to a continuous process. 

The symmetry of the variables $(L_m(t))$ and the fact that their sum is $F_N$ give that
\[
\E(C_P^N(t))= \int_0^t \E\left(L_1^N(s)\right)\diff s=\frac{F_N}{N}t.
\]
Hence, the sequence $(\E(C_P^N(t)))$ is converging to $\beta t$. 

By using again the same arguments, one has 
\begin{multline*}
\E\left(C_P^N(t)^2\right)
{=} \E\left(\int_{[0,t]^2} \!\!\!\left[\frac{1}{N{-}1}\!\!\sum_{{1{\leq} m{\leq}N}}\!\!\!\!\!L_m^N(s)\right]\!\!\!\left[\frac{1}{N{-}1}\!\!\!\sum_{{1{\leq} m{\leq}N}}\!\!\!\!\!L_m^N(s')\right]\!\!\diff s \diff s'\!\!\right) \\
-2 \E\left(\int_{[0,t]^2} \frac{L_1(s)}{N{-}1}\left[\frac{1}{N{-}1}\sum_{{1{\leq} m{\leq}N}}L_m^N(s')\right]\,\diff s\diff s'\right)+\E\left(\int_0^t \frac{L_1(s)}{N{-}1}\,\diff s \right)^2,
\end{multline*}
hence
\begin{equation}\label{eqc2}
\E\left(C_P^N(t)^2\right){=}\!\left(\!\frac{F_N}{N{-}1}t\!\right)^2\!\!\!\!{-}2F_Nt\int_0^t\frac{\E(L_1^N(s))}{(N{-}1)^2}\,\diff+\E\left(\int_0^t \frac{L_1(s)}{N{-}1}\,\diff s
\right)^2.
\end{equation}

With Lemma~\ref{techprop} in Appendix, one obtains therefore that the second moment of $C_P^N(t)$ is converging to $(\beta t)^2$, hence 
\[
\lim_{N\to+\infty} \E\left(\left(C_P^N(t)^2-\beta t\right)^2\right)=0.
\]
One concludes that finite marginals of  the process $(C_P^N(t))$ converge to the corresponding marginals of $(\beta t)$. Consequently, $(\beta t)$ is the only limiting point of  $(C_P^N(t))$ for the convergence in distribution.  The tightness property gives therefore the desired convergence.  The proposition is proved. 
\end{proof}

\begin{theorem}\label{prop1}
If the initial distribution of $(L_i^N(t))$  satisfies Condition~\eqref{IniCond} and converges to some distribution $\pi_0$, then, for the convergence in distribution,
  \[
  \lim_{N\to+\infty} (L_1^N(t))= (X^R_{\beta}(t)),
  \]
  where $(X^R_{\beta}(t))$ is a jump process on $\N$ with initial distribution $\pi_0$  whose $Q$-matrix $Q{=}(q(x,y))$ is given by, for $x{\in}\N$,
  $q(x,x+1)=\beta$ and $q(x,0)=1$.
\end{theorem}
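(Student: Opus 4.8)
The plan is to represent $(L_1^N(t))$ as a single measurable functional of three objects --- the initial value $L_1^N(0)$, the point process ${\cal P}^N[0,\cdot]$ counting the copies arriving at node~$1$, and the Poisson process $A^N{=}\overline{{\cal N}}_{1}([0,\cdot],{\cal M}_U)$ of its breakdowns --- and to show that this triple converges in distribution, with an \emph{independent} limit, to $(X_0,{\cal N}_\beta[0,\cdot],A)$, where $X_0\sim\pi_0$, ${\cal N}_\beta$ is a Poisson process of rate $\beta$ and $A$ a Poisson process of rate $1$; the statement then follows by the continuous mapping theorem. Integrating~\eqref{eqr} over $[0,t]$ gives, with $(t_n^1)$ the points of $A^N$,
\begin{equation*}
L_1^N(t)=L_1^N(0)+{\cal P}^N[0,t]-\sum_{n\geq 1}L_1^N(t_n^1-)\ind{t_n^1\leq t},
\end{equation*}
and, since $\sum_{m=1}^N L_m^N\equiv F_N$, the compensator $C_P^N$ of ${\cal P}^N[0,\cdot]$ recalled above equals
\[
C_P^N(t)=\frac{F_N}{N-1}\,t-\frac{1}{N-1}\int_0^tL_1^N(s)\,\diff s,
\]
so that the coupling between ${\cal P}^N$ and the breakdowns of node~$1$ is only of order $1/N$.

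Next I would establish the joint convergence
\[
\left(L_1^N(0),\,{\cal P}^N[0,\cdot],\,A^N\right)\Longrightarrow\left(X_0,\,{\cal N}_\beta[0,\cdot],\,A\right)
\]
with the three limiting objects mutually independent. For every $N$, $A^N$ is a rate-$1$ Poisson process independent of $L_1^N(0)$, and $L_1^N(0)\Rightarrow\pi_0$ by assumption, so it suffices to treat ${\cal P}^N$. It is an integer-valued increasing process with predictable compensator $C_P^N$; by Proposition~\ref{PoisComp} --- or directly from the displayed formula for $C_P^N$ together with the a~priori bound $\P(\sup_{[0,T]}L_1^N\leq K)\geq 1-\eps$ coming from Equation~\eqref{eq2} of the Appendix --- $C_P^N$ converges in distribution, uniformly on compact sets, to the \emph{deterministic} path $t\mapsto\beta t$; and by Proposition~\ref{Q=L} all the jumps of ${\cal P}^N$ on $[0,T]$ are equal to $1$ on an event of probability tending to $1$. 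A counting process whose jumps are $1$ outside a negligible event and whose compensator converges to the continuous deterministic limit $\beta t$ converges in distribution to a Poisson process of rate $\beta$ (Watanabe's/Rebolledo's characterization, tightness of ${\cal P}^N$ following from that of $C_P^N$ via Aldous's criterion); and, the limiting compensator being deterministic, this convergence holds jointly with, and the limit ${\cal N}_\beta$ is independent of, any tight adapted sequence, in particular $(A^N)$ and $L_1^N(0)$. This is the step I expect to be the main obstacle: for finite $N$, ${\cal P}^N$ and $A^N$ are genuinely correlated, since a breakdown of node~$1$ redistributes its copies to the other nodes and thereby feeds back into ${\cal P}^N$; what makes it work is that this feedback vanishes, as the above formula for $C_P^N$ makes quantitative.

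Let $\Phi:D(\R_+,\N)\times D(\R_+,\N)\times\N\to D(\R_+,\N)$ be the map sending $(p,a,x_0)$ to the c\`adl\`ag path $\ell$ determined by $\ell(0){=}x_0$, $\ell$ constant between consecutive jump times of $a$, $\ell$ increased by $\Delta p(s)$ at each jump time $s$ of $p$, and $\ell$ reset to $0$ at each jump time of $a$. By the integrated form of~\eqref{eqr}, one has $L_1^N=\Phi(L_1^N(0),{\cal P}^N[0,\cdot],A^N)$ identically. The map $\Phi$ is continuous at every triple $(p,a,x_0)$ for which $p$ and $a$ have disjoint sets of jump times, and, ${\cal N}_\beta$ and $A$ being independent Poisson processes, this set has probability $1$ under the limiting law. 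The continuous mapping theorem for the Skorokhod topology therefore yields $(L_1^N(t))\Rightarrow(X^R_\beta(t)):=\Phi(X_0,{\cal N}_\beta[0,\cdot],A)$.

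Finally I would identify the limit. The path $\Phi(X_0,{\cal N}_\beta[0,\cdot],A)$ is piecewise constant and changes only at the points of the superposition ${\cal N}_\beta\cup A$, which, by independence, is a Poisson process of rate $\beta{+}1$ whose points are labelled independently ``input'' with probability $\beta/(\beta{+}1)$ and ``reset'' with probability $1/(\beta{+}1)$; at an ``input'' point the value increases by $1$ and at a ``reset'' point it jumps to $0$. Hence from any state $x$ the process jumps to $x{+}1$ at rate $\beta$ and to $0$ at rate $1$: it is exactly the Markov jump process on $\N$ with initial law $\pi_0$ and $Q$-matrix $q(x,x{+}1){=}\beta$, $q(x,0){=}1$, which is the assertion. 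One could instead argue via the martingale problem for $(L_1^N(t))$, checking that $f(L_1^N(t)){-}f(L_1^N(0)){-}\int_0^t{\cal A}_Nf(L_1^N(s))\,\diff s$ is a martingale with ${\cal A}_Nf(x)\to\beta(f(x{+}1){-}f(x)){+}(f(0){-}f(x))$ for bounded $f$, but controlling ${\cal A}_N$ requires the same estimates as above.
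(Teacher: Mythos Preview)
Your proposal is correct and follows essentially the same route as the paper: both represent $L_1^N$ as a functional of the arrival process ${\cal P}^N$ and the breakdown process $(t_n^1)$, invoke Propositions~\ref{Q=L} and~\ref{PoisComp} to conclude that ${\cal P}^N$ converges to a rate-$\beta$ Poisson process (the paper via Theorem~5.1 of Kasahara--Watanabe, you via the Watanabe/Rebolledo characterization), and then pass to the limit through this functional. You are more explicit than the paper about the joint convergence and the asymptotic independence of ${\cal N}_\beta$ from $A$, correctly tracing it to the deterministic limit of the compensator; the paper's proof leaves this step implicit in the citation.
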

\begin{proof}
  By using Proposition~\ref{Q=L},  Proposition~\ref{PoisComp} and Theorem~5.1 of~\cite{Kasahara}, one concludes that the sequence of point processes $({\cal P}^N[0,t])$ is converging in distribution to a Poisson process ${\cal N}_\beta$ with rate $\beta$.

  Recall that $\overline{{\cal N}}_{1}(\diff t,{\cal M}_U )=(t_n^1)$, from SDE~\eqref{eqr}, one has
  \[
  \diff L_i^N(t)= {-}L_i^N(t{-}) \overline{{\cal N}}_{i}(\diff t,{\cal M}_U )+{\cal P}^N(\diff t),
  \]
  thus, for $t{>}0$,
  \[
  L_1^N(t)={\cal P}^N(t_n^1,t] \text{ if } t_n^1\leq t <t_{n+1}^1.
    \]
The convergence we have obtained shows that     $(L_1^N(t))$ is converging in distribution to $(\overline{L}_1(t))$ where
\[
\overline{L}_1(t)={\cal N}_{\beta}(t_n^1,t] \text{ if } t_n^1\leq t <t_{n+1}^1.
\]
This is the desired result. 
\end{proof}
This result explains the phenomenon observed in the simulations, Figure~\ref{fig:load_evol_199}, if a node has been alive for $t$ units of time, asymptotically it has received a Poissonnian number of files with rate $\beta t$, hence its average is growing linearly with $t$. 
\begin{proposition}\label{convprop}
The equilibrium distribution of $(L_1^N(t))$ is converging in distribution to  $\overline{X}^R_{\beta}$, a geometrically  distributed random variable with parameter $\beta/(1{+}\beta)$. 
\end{proposition}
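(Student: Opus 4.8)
The plan is to prove this interchange-of-limits statement (of the type $\lim_N\lim_t=\lim_t\lim_N$) by combining a tightness estimate for the stationary marginals with the trajectorial convergence of Theorem~\ref{prop1}. Since the total number of copies is constant equal to $F_N$, the Markov process $(L^N(t))$ lives on the finite set $\{\ell\in\N^N:\ell_1+\dots+\ell_N=F_N\}$, so it admits an invariant distribution which, by the invariance of the dynamics under permutations of the nodes, can be chosen exchangeable; let $\Pi_N$ be such a distribution and let $\overline{L}_1^N$ be a random variable with its first marginal, so that $\E(\overline{L}_1^N)=F_N/N$. The goal is to show that $\overline{L}_1^N$ converges in distribution to $\overline{X}^R_\beta$ with $\P(\overline{X}^R_\beta=k)=(1{+}\beta)^{-1}(\beta/(1{+}\beta))^k$ for $k\in\N$.

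The first ingredient is tightness of $(\overline{L}_1^N)$: since $\E(\overline{L}_1^N)=F_N/N\to\beta$, Markov's inequality immediately gives tightness on $\N$. The second ingredient, needed in order to later run the $N$-node system from its equilibrium, is the uniform bound $\sup_N\E((\overline{L}_1^N)^2)<\infty$, i.e. that $\Pi_N$ satisfies Condition~\eqref{IniCond}. I would obtain it from a Lyapunov computation: writing the stationarity identity $\E(\Omega f(L^N(\infty)))=0$ for the generator $\Omega$ of~\eqref{eqr} and $f(\ell)=\ell_1^2$, and using exchangeability together with $\sum_m L_m^N=F_N$ to evaluate the resulting quadratic terms, one is left with an inequality on $\E((\overline{L}_1^N)^2)$ whose right-hand side stays bounded as $N\to\infty$.

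Now take any subsequence along which $\overline{L}_1^N$ converges in distribution to some probability measure $\nu$ on $\N$, and run $(L^N(t))$ started from $\Pi_N$. Its initial condition is exchangeable, satisfies~\eqref{IniCond} uniformly in $N$ by the previous step, and its first marginal converges to $\nu$; hence Theorem~\ref{prop1} applies and $(L_1^N(t))$ converges in distribution, in the sense of processes, to $(X^R_\beta(t))$ started from $\nu$. But for every $N$ of the subsequence $(L_1^N(t))$ is a stationary process, so the limit $(X^R_\beta(t))$ is stationary as well: convergence in distribution of these processes entails convergence of the finite-dimensional distributions at continuity points, and $(X^R_\beta(t))$ has almost surely no fixed time of discontinuity, so its finite-dimensional distributions inherit shift invariance. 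Therefore $\nu$ is an invariant distribution of the $Q$-matrix of Theorem~\ref{prop1}.

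It remains to identify $\nu$. The $Q$-process is irreducible on $\N$ (from $x$ it jumps to $x{+}1$ at rate $\beta$ and to $0$ at rate $1$, so all states communicate), and a direct check of the balance equations $\nu Q=0$ shows that $\nu_k=(1{+}\beta)^{-1}(\beta/(1{+}\beta))^k$ is a solution and a probability measure; an irreducible Markov chain admitting an invariant probability distribution is positive recurrent and this distribution is unique. Hence $\nu$ equals this geometric law, independently of the chosen subsequence, and the whole sequence $(\overline{L}_1^N)$ converges in distribution to $\overline{X}^R_\beta$. The crux of the argument is the step feeding Theorem~\ref{prop1}, namely the uniform-in-$N$ control of the second moment at equilibrium; granting that, the remainder is soft: tightness, the principle that a limit of stationary processes is stationary, and the explicit determination of the unique invariant law of the limiting jump process.
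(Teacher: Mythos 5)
Your proof is correct, but it takes a genuinely different route from the paper's. The paper exploits the regenerative structure of $(L_1^N(t))$ at the breakdown instants of node~$1$: it writes the stationary expectation via the cycle formula, $\E_{\pi_\beta^N}(f(L_1))=\E_{\widehat{\pi}_\beta^N}\bigl(\int_0^{t_1^1}f(L_1^N(s))\,\diff s\bigr)$ (no normalization, the cycle length being exponential with mean $1$), proves a uniform second-moment bound under the Palm distribution $\widehat{\pi}_\beta^N$ at those instants (Proposition~\ref{LemW2}, via Lemma~\ref{LemW}) so that Theorem~\ref{prop1} applies from that initial condition, and then recognizes the limit of the cycle integral as the cycle-formula representation of the invariant law of $(X_\beta^R(t))$ started at $0$ --- so the identification of the limit is immediate and no separate stationarity-passage or uniqueness argument for the limiting chain is needed. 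You instead run the $N$-node chain from its time-stationary law, use the standard principle that a limit of stationary processes is stationary (which does require, as you note, f.d.d.\ convergence and the absence of fixed discontinuities of the limit), and conclude by uniqueness of the invariant probability of the irreducible, non-explosive limiting $Q$-process. The technical crux is the same in both proofs: a uniform-in-$N$ second moment bound at equilibrium so that Condition~\eqref{IniCond} holds for the chosen initial law. The paper gets it from the discrete stationarity identity $\E(W_0)=\E(W_1)$ under the Palm measure; you get it from the generator identity $\E_{\Pi_N}(\Omega\,\ell_1^2)=0$, which is legitimate since the state space $\{\ell:\sum_i\ell_i=F_N\}$ is finite, and which closes in the same way using exchangeability and the conservation $\sum_m L_m^N=F_N$ (it in fact yields the exact value $\E((\overline{L}_1^N)^2)=2F_N^2/N^2+F_N(N-2)/N^2\to 2\beta^2+\beta$, the second moment of the claimed geometric law). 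Your route is the more standard ``convergence of invariant distributions'' scheme and avoids introducing the Palm distribution altogether; the paper's route trades that for a shorter identification step at the end.
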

\begin{proof}
  Denote by $\pi_{\beta}^N$ the invariant distribution of the process $(L_1^N(t))$. By symmetry, we know that $$  \E_{\pi_{\beta}^N}\left(L_1^N(0)\right){=}{F_N}/{N},$$ hence the sequence of probability distributions $(\pi_{\beta}^N)$ is tight. Let $\pi$ be some limiting point of this sequence for some subsequence $(N_k)$. If $f$ is some function on $\N$ with finite support, then the cycle formula for the invariant distribution of the ergodic Markov process $(L^N(t))$ gives the relation
  \[
\E_{\pi_{\beta}^N}(f(L_1))=\E_{\widehat{\pi}_{\beta}^N}\left(\int_0^{t_1^1} f(L_1^N(s))\,\diff s\right),
\]
where $\widehat{\pi}_{\beta}^N$ is the distribution of $(L^N(t))=(L_i^N(t))$ at the instants of jumps of breakdowns of node~$1$. In particular, $$\widehat{\pi}_{\beta}^N\left(\ell{=}(\ell_i){\in}\N^N,\, \ell_1{=}0\right)=1.$$

By Proposition~\ref{LemW2} in Appendix,  Theorem~\ref{prop1} is also true  when the initial distribution is $\widehat{\pi}_{\beta}^N$ hence, for the convergence in distribution,
\[
\lim_{N\to+\infty} \left(\int_0^{t_1^1} f(L_1^N(s))\,\diff s \right)=\left(\int_0^{t_1^1} f(X^R_\beta(s))\,\diff s\right),
\]
when the process $(X^R_\beta(t))$ has initial point $0$.  Consequently, by Lebesgue's Theorem,
\[
\E_{\pi}(f) =\lim_{N\to+\infty} \E_{\pi_{\beta}^N}(f(L_1))= \E_{0}\left(\int_0^{t_1^1} f(X^R_\beta(s))\,\diff s\right).
\]
The last term of this equation is precisely the invariant distribution of $(X^R_\beta(t))$, again with the cycle formula for ergodic Markov processes.  The  probability $\pi$ is necessarily the invariant distribution of $(X^R_{\beta}(t))$,   hence 
the sequence $(\pi_{\beta}^N)$ is converging to $\pi$.  It is easily checked that $\pi$ is a geometric distribution with parameter $\beta/(1{+}\beta)$.  
\end{proof}
By using the fact that
\[
\P\left(\overline{X}^R_{\beta}\geq n\right)=\left(\frac{\beta}{1+\beta}\right)^n,
\]
it is then easy to get the following result.
\begin{theorem}[Equilibrium at High Load]\label{th1}
The convergence in distribution
  \[
  \lim_{\beta\to+\infty} \frac{\overline{X}^R_{\beta}}{\beta}=E_1,
  \]
holds,  where $E_1$ is an exponential random variable with parameter~$1$. 
\end{theorem}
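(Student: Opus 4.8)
The plan is to reduce everything to the explicit formula for the tail of $\overline{X}^R_\beta$ furnished by Proposition~\ref{convprop}, namely that $\overline{X}^R_\beta$ is geometric with parameter $\beta/(1{+}\beta)$, so that
\[
\P\left(\overline{X}^R_\beta\geq n\right)=\left(\frac{\beta}{1+\beta}\right)^n,\qquad n\in\N.
\]
Since the $\overline{X}^R_\beta$ are $\N$-valued and the candidate limit $E_1$ has a continuous distribution function, convergence in distribution is equivalent to the pointwise convergence of the tail functions $x\mapsto\P(\overline{X}^R_\beta/\beta\geq x)$ to $x\mapsto \P(E_1\geq x)=e^{-x}$ for every $x>0$ (the case $x=0$ being trivial since both sides equal $1$). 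So I would fix $x>0$ and simply analyze $\P(\overline{X}^R_\beta/\beta\geq x)$ as $\beta\to+\infty$.

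First I would rewrite, using that $\overline{X}^R_\beta$ is integer-valued,
\[
\P\left(\frac{\overline{X}^R_\beta}{\beta}\geq x\right)=\P\left(\overline{X}^R_\beta\geq \beta x\right)=\P\left(\overline{X}^R_\beta\geq \lceil \beta x\rceil\right)=\left(\frac{\beta}{1+\beta}\right)^{\lceil \beta x\rceil}.
\]
Then I would take logarithms: the exponent is
\[
\lceil \beta x\rceil\,\log\!\left(\frac{\beta}{1+\beta}\right)=-\lceil \beta x\rceil\,\log\!\left(1+\frac1\beta\right).
\]
Using $\log(1+1/\beta)=1/\beta+O(1/\beta^2)$ together with $\beta x\leq \lceil \beta x\rceil< \beta x+1$, this exponent converges to $-x$ as $\beta\to+\infty$, whence $\P(\overline{X}^R_\beta/\beta\geq x)\to e^{-x}$. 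This is exactly $\P(E_1\geq x)$, and since it holds for all $x>0$ with a continuous limit, $\overline{X}^R_\beta/\beta$ converges in distribution to $E_1$, which is the claim.

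There is essentially no substantive obstacle here; the argument is a one-line asymptotic once Proposition~\ref{convprop} is in hand. The only minor points to be careful about are the bookkeeping with the ceiling function $\lceil \beta x\rceil$ (which affects the exponent only by a bounded amount and hence disappears in the limit after multiplication by $\log(1+1/\beta)=O(1/\beta)$) and the remark that, because $E_1$ has no atoms, pointwise convergence of the complementary distribution functions on $(0,\infty)$ genuinely yields convergence in distribution. Alternatively, one could phrase the same computation in terms of convergence of Laplace transforms, $\E(e^{-s\overline{X}^R_\beta/\beta})\to 1/(1+s)$, but the tail computation above is the most direct route.
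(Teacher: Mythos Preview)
Your proposal is correct and follows exactly the route the paper indicates: the paper simply records the tail identity $\P(\overline{X}^R_\beta\geq n)=(\beta/(1+\beta))^n$ and states that the theorem is then ``easy to get,'' without spelling out the limit computation. Your argument supplies precisely those missing details (the handling of $\lceil\beta x\rceil$, the logarithmic asymptotics, and the continuity-of-the-limit remark), so there is nothing to correct or compare.
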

In particular the probability that, at equilibrium, the load of a given node is more than twice the average load is
\[
  \lim_{\beta\to+\infty}\P\left(\overline{X}^R_{\beta} \geq 2\beta\right)=\exp({-}2)\sim 0.135,
\]
which is consistent with the simulations, see Figure~\ref{fig:load_distrib}.

\subsection{The Power of Choice Algorithm}\label{SecPC}
Similarly as before, for $1\leq i\leq N$, $\overline{{\cal N}}_{i}=(t_n^i,(\overline{V}^{i}_n))$ denotes the marked Poisson point process defined as follows:

\noindent
\begin{itemize}
\item[---] $(t_n^i)$ is a Poisson process on $\R_+$ with rate $1$;
\item[---] $\overline{V}^{i}_n{=}(V^{i,n}_p){=}((V^{i,n}_{0,p},V^{i,n}_{1,p},B^{i,n}_p))$ where $(V^{i,n}_{0,p},V^{i,n}_{1,p})$ is an i.i.d. sequence with common distribution $(V_0,V_1)$ is uniform on the set of pairs of distinct elements of  $\{1,\ldots,N\}{\setminus}\{i\}$. Finally,   $(B^{i,n}_p)$ is i.i.d. Bernoulli sequence of random variables with parameter $1/2$.
\end{itemize}
The set of marks ${\cal M}_V$ is defined as
\[
{\cal M}_V{=}\left\{ \overline{v}{=}(v_{0,p},v_{1,p},b_p){\in}\{1,\ldots,N\}^2{\times}\{0,1\}{:} v_{0,p}{\not=}v_{1,p}\right\}
\]
For $1{\leq} i{\leq}N$ and $n{\geq} 1$,  $t_n^i$ is the instant of the $n$th breakdown of server $i$.  For $p\geq 1$,  $V^{i,n}_{0,p}$ and $V^{i,n}_{1,p}$ are the servers where the $p$th copy present on node $i$ may be allocated after this breakdown, depending on their respective loads of course. If the two loads are equal, the Bernoulli random  variable $B^{i,n}_p$ is then used. 
\vspace{2mm}

\paragraph{Equations of Evolution} 
\noindent For $1{\leq}i{\leq}N$ and $t{\geq}0$, $Q_i^N(t)$ is the number of copies on server~$i$ at time $t$ for this policy and $(Q^N(t)){=}(Q_i^N(t))$.
The dynamics of the power of choice algorithm is represented by the following stochastic differential equation, for $1{\leq} i{\leq} N$,
\begin{multline}\label{eqp}
\diff Q_i^N(t)= {-}Q_i^N(t{-}) \overline{{\cal N}}_{i}(\diff t,{\cal M}_V )\\+\sum_{m=1,m\neq i}^N\int_{\overline{v}\in {\cal M}_V} R^N_{mi}(Q^N(t{-}),\overline{v}) \,\overline{{\cal N}}_{m}(\diff t,\diff \overline{v})
\end{multline}
where $R^N_{mi}:\N^N{\times}{\cal M}_V\mapsto \N$ is the function, for $\ell=(\ell_k)$ and $\overline{v}{=}(v_{0,p},v_{1,p},b_p){\in}{\cal M}_V$, 
\[
R^N_{mi}(\ell, \overline{v}){=}\sum_{k=1}^{\ell_m} \ind{i\in\{v_{0,k},v_{1,k}\}} \left(\ind{\ell_i{<}\ell_{v_{0,k}}\!{\vee}\ell_{v_{1,k}}}{+}\ind{\ell_{v_{0,k}}{=}\ell_{v_{1,k}},i{=}v_{b_k,k}}\right).
\]
As it can be seen, when node $m$ breaks down while the network is in state $\ell$, $R^N_{mi}(\ell, \overline{v})$ is the number of copies sent to node $i$ by the power of choice policy if $\overline{v}$ is the corresponding mark associated to this instant.

Contrary to the random policy, the allocation depends on the state $(Q^N(t))$,  for this reason it is convenient to introduce 
the empirical distribution $\Lambda^N(t)$ as follows, if $f$ is some real-valued function on $\N$,
\[
\croc{\Lambda^N(t),f}=\int_\N f(\ell) \Lambda^N(t)(\diff \ell)=\frac{1}{N}\sum_{i=1}^N f\left(Q_i^N(t)\right).
\]
If $0{\leq} a{\leq} b$, $\croc{\Lambda^N(t),[a,b]}$ denotes $\Lambda^N(t)$ applied to the indication function of $[a,b]$.
In the same way as in the proof of Proposition~\ref{Q=L}, it can be proved that, with high probability and uniformly on any finite time interval, ${+}1$ is the unique value of  positive jumps of all processes.
By using  Equation~\eqref{eqr} and the definition of $\Lambda^N(t)$, one gets that, for a finite support function $f$,  with high probability, 
\begin{multline*}
  \diff\croc{\Lambda^N(t),f} = \diff M_f^N(t)+ \croc{\Lambda^N(t),f(0){-}f}\,\diff t
\\ +\sum_{\ell\in\N}\left[f\left(\ell{+}1\right){-}f\left(\ell\right)\right]
  \sum_{m} Q_m^N(t) \frac{1}{(N{-}1)(N{-}2)}\times \\ 
\bigg[ \sum_{\substack{j\not=j'\\j,j'\not=m}}\ind{Q_j^N(t)\geq \ell}\ind{Q_{j'}^N(t)\geq \ell}-\sum_{\substack{j\not=j'\\j,j'\not=m}}\ind{Q_j^N(t)\geq \ell+1}\ind{Q_{j'}^N(t)\geq \ell+1}\bigg]\,\diff t,
\end{multline*}
where $M_f(t)$ is a martingale. 
Note that the terms inside the brackets in the last equation is simply the number of pairs of nodes whose state is greater than $\ell$ and the state of at least one of them is $\ell$. 
By integrating, this  gives the relation
\begin{multline}\label{eqpow}
 \langle\Lambda^N(t),f\rangle{=} \langle\Lambda^N(0),f\rangle{+}M_f^N(t){+} \int_0^t \langle \Lambda^N(s), f(0){-}f\rangle\, \diff s\\
             {+} \beta \int_0^t \langle \Lambda^N(s), g_s\rangle \,\diff s+O(1/N),
\end{multline}
with
\[
g_s(\ell)=\left(f(\ell{+}1){-}f(\ell)\rule{0mm}{4mm}\right)\frac{\left[\Lambda^N(s)([\ell,{+}\infty))^2{-}\Lambda^N(s)([\ell {+}1,{+}\infty))^2\right]}{\Lambda^N(s)(\{\ell\}) }.
\]
\begin{proposition}[Mean-Field Convergence]\label{thP}\ 
  \begin{enumerate}
  \item  The distribution of $(Q_1^N(t))$ is converging in distribution to $({X}_{\beta}^P(t))$, a non-homogeneous Markov process whose $Q$-matrix $Q(t){=}(q(t)(x,y))$ is given by, for $x\in\N$,
    $q(t)(x,0)=1$ and 
  \[
 q(t)(x,x{+}1)  =\beta\frac{\P\big({X}_{\beta}^P(t)\geq x\big)^2-\P\big({X}_{\beta}^P(t)\geq x{+}1\big)^2}{\P\big({X}_{\beta}^P(t)=x\big)}
  \]
\item For the convergence in distribution, if $f$ has finite support, 
  \[
  \lim_{N\to+\infty} \left(\langle\Lambda^N(t),f\rangle\right)= \left(\E\left[f\left({X}_{\beta}^P(t)\right)\right]\right).
  \]
  \end{enumerate}
\end{proposition}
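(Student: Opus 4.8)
The plan is to follow the standard mean-field route in two stages: first establish that the empirical distribution $(\Lambda^N(t))$ converges to a deterministic measure-valued flow $(\mu_t)$, and then deduce the asymptotic dynamics of a tagged node, say node~$1$, by a propagation-of-chaos argument, identifying its limit with the nonlinear Markov process $({X}^P_\beta(t))$. All the estimates would be carried out on a fixed horizon $[0,T]$, using throughout that, by exchangeability, all the coordinates $Q_i^N(t)$ share the same distribution, with mean $F_N/N$. To make tightness work I would first propagate the second-moment bound: arguing as for Equation~\eqref{eq2} and Lemma~\ref{techprop}, one gets $\sup_N\sup_{t\le T}\E(Q_1^N(t)^2)<\infty$ --- the rate at which node~$1$ receives copies is bounded by $2F_N/(N{-}1)$, so $\E(Q_1^N(t)^2)$ satisfies a linear differential inequality and Gronwall's lemma applies --- and the variables $(Q_1^N(t))_N$ are then uniformly integrable for each $t$, with mean $F_N/N\to\beta$. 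For $f$ with finite support, I would use the semimartingale decomposition~\eqref{eqpow} to get tightness of $(\langle\Lambda^N(t),f\rangle)$ in the Skorokhod space: the martingale $M_f^N$ has predictable quadratic variation of order $1/N$ --- all positive coordinate jumps equal $+1$ with high probability, as in Proposition~\ref{Q=L}, and $\E(Q_1^N(s)^2)$ is bounded --- so it vanishes by Doob's inequality, while the two drift integrals have a derivative bounded by a constant depending only on $f$ and $\beta$ (each $|f(\ell{+}1){-}f(\ell)|$ is nonzero for only finitely many $\ell$, and $\Lambda^N(s)([\ell,\infty))\le1$). Billingsley's modulus-of-continuity criterion (Theorem~7.2 of~\cite{Billingsley}) then applies exactly as in the proof of Proposition~\ref{PoisComp}, and, together with the first-moment bound (which supplies the compact-containment property), this yields tightness of $(\Lambda^N(t))$ among probability-measure-valued c\`adl\`ag paths, with continuous limit points.

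The key observation for identifying the limit is that the denominator in $g_s$ is illusory: writing $G(\ell)=\Lambda^N(s)([\ell,\infty))$, the factorisation $a^2-b^2=(a-b)(a+b)$ with $a-b=\Lambda^N(s)(\{\ell\})$ rewrites the nonlinear term as
\[
\langle\Lambda^N(s),g_s\rangle=\sum_{\ell\in\N}\bigl(f(\ell{+}1){-}f(\ell)\bigr)\bigl(G(\ell){+}G(\ell{+}1)\bigr)\,\Lambda^N(s)(\{\ell\}),
\]
a bounded, continuous functional of $\Lambda^N(s)$ (the sum is finite since $f$ has finite support, and $G(\ell){+}G(\ell{+}1)\le2$). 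Passing to the limit in~\eqref{eqpow} along a convergent subsequence, the martingale and $O(1/N)$ terms disappear, so any limit $(\mu_t)$ solves, for every finite-support $f$,
\begin{multline*}
\langle\mu_t,f\rangle=\langle\mu_0,f\rangle+\int_0^t\bigl(f(0){-}\langle\mu_s,f\rangle\bigr)\,\diff s\\
{}+\beta\int_0^t\sum_{\ell\in\N}\bigl(f(\ell{+}1){-}f(\ell)\bigr)\bigl(G_s(\ell){+}G_s(\ell{+}1)\bigr)\mu_s(\{\ell\})\,\diff s,
\end{multline*}
with $G_s(\ell)=\mu_s([\ell,\infty))$. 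Taking $f(x)=\ind{x=k}$ and summing over $k\ge\ell$ gives the closed, triangular system
\[
\frac{\diff}{\diff t}G_t(\ell)=\beta\bigl(G_t(\ell{-}1)^2-G_t(\ell)^2\bigr)-G_t(\ell),\qquad\ell\ge1,\quad G_t(0)=1,
\]
which I would solve inductively in $\ell$: on each line the right-hand side is locally Lipschitz and the solution stays in $[0,1]$, so Gronwall gives a unique global solution, while the uniform moment bounds ensure that each $\mu_t$ is a probability measure with $G_t(\ell)\to0$ as $\ell\to\infty$. Uniqueness of the limit point then proves part~(2) for finite-support $f$, and uniform integrability of the loads extends it to $f(x)=x$.

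For part~(1) I would restrict~\eqref{eqp} to $i=1$: node~$1$ is reset to $0$ at the rate-$1$ instants $(t_n^1)$ and, when in state $x$, receives a copy at rate $\beta\bigl(\Lambda^N(t)([x,\infty))+\Lambda^N(t)([x{+}1,\infty))\bigr)+o(1)$, which by the previous step converges to the \emph{bounded} rate $q(t)(x,x{+}1)=\beta\bigl(G_t(x)+G_t(x{+}1)\bigr)\le2\beta$ --- the quantity written in the statement, via $a^2-b^2$, as $\beta(\P(X_\beta^P(t)\ge x)^2-\P(X_\beta^P(t)\ge x{+}1)^2)/\P(X_\beta^P(t)=x)$. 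I would then couple $(Q_1^N(t))$ with the time-inhomogeneous chain $(X_\beta^P(t))$ of $Q$-matrix $Q(t)$ and initial law $\pi_0$, driven by the same failure clock $(t_n^1)$ and by Poisson inputs whose rates are matched through $\Lambda^N(s)\to\mu_s$ (a time-inhomogeneous version of the Poisson convergence used in Theorem~\ref{prop1}), using the boundedness of the rates to control the coupling error; this gives $(Q_1^N(t))\Rightarrow(X_\beta^P(t))$, and the consistency relation $\mathrm{law}(X_\beta^P(t))=\mu_t$ holds because $\mathrm{law}(X_\beta^P(t))$ solves the same forward equation as $(\mu_t)$, which has a unique solution. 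I expect this last stage to be the main obstacle: turning the heuristic ``law of small numbers plus empirical-measure convergence'' into a rigorous coupling or martingale-problem characterisation of the tagged node is delicate precisely because its rate of receiving copies depends on the whole configuration --- and on its own load --- through $\Lambda^N$, which is exactly the mechanism producing the nonlinearity of $X_\beta^P$, and because passing to the limit for the unbounded observable $f(x)=x$ needs the uniform integrability established above. This is where one would invoke the general mean-field results for which the paper provides a reference, rather than reproduce a self-contained argument.
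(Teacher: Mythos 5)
The paper gives no proof of this proposition: it is explicitly omitted, with a pointer to~\cite{RS2} and to Graham~\cite{Graham}, and is described only as resting on the convergence of $(\Lambda^N(t))$ via Equation~\eqref{eqpow}. Your outline follows exactly that route (moment bounds and tightness of the empirical measure, identification of the limit from~\eqref{eqpow}, then propagation of chaos for the tagged node), and your cancellation of the denominator in $g_s$ correctly identifies the limiting jump rate as $\beta\bigl(G_t(x){+}G_t(x{+}1)\bigr)$, so the proposal is a sound sketch of essentially the same argument the paper defers to its reference.
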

The proof which is quite technical is omitted to concentrate on the asymptotic behavior of the invariant distribution. It can be found in~\cite{RS2}. It is based on the proof of the convergence of the process $(\Lambda^N(t))$ by using Equation~\eqref{eqpow}. It is  similar in fact to the proof of an analogous result in the context of queuing systems, see Graham~\cite{Graham} for example.  The last reference contains also the existence and uniqueness result of such a non-homogeneous Markov process. 

\paragraph{The Invariant Distribution}
\noindent In this part, we study the asymptotic  behavior  of the invariant distribution of the load of a node at equilibrium.

\begin{proposition}\label{pibeta}
  The process  $({X}_{\beta}^P(t))$ of Proposition~\ref{thP} has a unique invariant distribution $\pi_\beta^P$ on $\N$,  which can be defined by induction as
  \[
\pi_\beta^P([x{+}1,{+}\infty)){=}\frac{-1{+}\sqrt{1{+}4\beta^2\pi_\beta^P([x,{+}\infty))^2}}{2\beta},\quad x\in\N,
    \]
    with $\pi_\beta^P([0,{+}\infty)){=}1$.
\end{proposition}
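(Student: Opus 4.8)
The plan is to use the fact that a distribution $\pi$ is invariant for the nonlinear process of Proposition~\ref{thP} exactly when the following holds: take the stationary version, so that $X_\beta^P(t)\sim\pi$ for all $t$ and the $Q$-matrix $Q(t)$ of Proposition~\ref{thP} becomes the \emph{time-homogeneous} matrix $Q_\pi$ with $Q_\pi(x,0)=1$ and $Q_\pi(x,x{+}1)=\beta\big(\pi([x,\infty))^2-\pi([x{+}1,\infty))^2\big)/\pi(\{x\})$; then $\pi$ must be a stationary distribution of $Q_\pi$. In this way the statement reduces to a classical balance-equation computation for a ``birth-and-catastrophe'' chain whose only moves are $+1$ jumps and jumps back to $0$ at rate $1$.

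First I would set $\bar p_x=\pi([x,\infty))$ and $p_x=\pi(\{x\})=\bar p_x-\bar p_{x+1}$, and use the cut between $\{0,\dots,x\}$ and $\{x{+}1,x{+}2,\dots\}$. The only transitions crossing this cut upwards are the $+1$ jumps from $x$, at rate $Q_\pi(x,x{+}1)$, and the only ones crossing downwards are the rate-$1$ jumps to $0$ issued from each state $\geq x{+}1$; hence stationarity of $Q_\pi$ forces $p_x\,Q_\pi(x,x{+}1)=\bar p_{x+1}$ for all $x\in\N$. Since $p_x\,Q_\pi(x,x{+}1)=\beta(\bar p_x^2-\bar p_{x+1}^2)$, this reads $\beta(\bar p_x^2-\bar p_{x+1}^2)=\bar p_{x+1}$, a quadratic in $\bar p_{x+1}\geq 0$ whose nonnegative root is precisely $\bar p_{x+1}=(-1+\sqrt{1+4\beta^2\bar p_x^2})/(2\beta)$; and $\bar p_0=\pi([0,\infty))=1$ because $\pi$ is a probability measure. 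Conversely, a short check — summing the cut identities, or evaluating $\pi Q_\pi$ directly at $0$ and at each $x\geq 1$ using the relations $p_x Q_\pi(x,x{+}1)=\bar p_{x+1}$ and $\bar p_x-\bar p_{x+1}=p_x$ — shows these relations are equivalent to $\pi Q_\pi=0$, so nothing is lost by passing through the cut equations.

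It then remains to verify that the sequence produced by the recursion is the tail of a genuine probability distribution on $\N$. Starting from $\bar p_0=1$, the map $t\mapsto(-1+\sqrt{1+4\beta^2 t^2})/(2\beta)$ sends $(0,1]$ into $(0,1)$ and satisfies the strict inequality $\bar p_{x+1}<\bar p_x$ whenever $\bar p_x>0$ (equivalently $\sqrt{1+4\beta^2 t^2}<1+2\beta t$ for $t>0$); hence $(\bar p_x)$ is strictly decreasing and positive, so it converges to some $\ell\geq 0$, and letting $x\to\infty$ in $\beta(\bar p_x^2-\bar p_{x+1}^2)=\bar p_{x+1}$ forces $\ell=0$. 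Consequently $p_x=\bar p_x-\bar p_{x+1}>0$ for every $x$ and $\sum_{x=0}^n p_x=1-\bar p_{n+1}\to 1$, so $\pi$ is well defined as a probability measure and, by construction, satisfies $\pi Q_\pi=0$ with that same $\pi$; since the recursion determines $(\bar p_x)$ uniquely from $\bar p_0=1$, any invariant distribution must coincide with it, giving uniqueness. One should also record that the chain with generator $Q_\pi$ is non-explosive: from the recursion $\bar p_{x+1}\sim\beta\bar p_x^2$ as $x\to\infty$, so $Q_\pi(x,x{+}1)=\bar p_{x+1}/p_x\sim\beta\bar p_x\to 0$, the probability of an upward jump from $x$ vanishes, and every excursion away from $0$ makes finitely many jumps almost surely.

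The computation itself is routine; the step needing a little care is the last one — checking that the formally defined recursion genuinely yields a probability distribution (in particular that $\bar p_x\to 0$, which hinges on the strict decrease argument) and that the cut/balance relations truly \emph{characterize} stationarity of $Q_\pi$ rather than being merely necessary — after which existence and uniqueness both follow at once.
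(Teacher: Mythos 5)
Your proof is correct and follows essentially the same route as the paper: freeze the law to obtain the time-homogeneous generator $Q_\pi$, write the balance equations (your cut identity $\pi(\{x\})q(x,x{+}1)=\pi([x{+}1,\infty))$ is exactly the paper's relation $\xi(x{+}1)=\P_\pi(X_\beta^P\geq x{+}1)$), deduce the quadratic $\beta(\bar p_x^2-\bar p_{x+1}^2)=\bar p_{x+1}$, and solve the recursion from $\bar p_0=1$. The only difference is that you spell out the step the paper dismisses as ``easily seen'' --- that the recursion is strictly decreasing and its limit $\ell$ must satisfy $\ell=\beta(\ell^2-\ell^2)=0$, so $\pi$ is a genuine probability --- which is a welcome addition rather than a divergence.
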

It should be noted that, due to the non-homogeneity of the Markov process, the uniqueness property is not clear in principle. 
\begin{proof}
  Let $\pi$ be an invariant probability of the process. If we start from this initial distribution, obviously the coefficients of the $Q$-matrix do not depend of  time, the invariant equations can be written as
  \[
  \begin{cases}  \pi(x)(1+q(x,x{+}1))=\pi(x{-}1)q(x{-}1,x), \, x>0,\\
    \pi(0)(1+q(0,1))=1.
  \end{cases}
  \]
  Define, for $x\geq 1$, $\xi(x){=}\pi(x{-}1)q(x{-}1,x)$, then $$\pi(x){=}\xi(x){-}\xi(x+1),$$ in particular $\P_{\pi}(X_{\beta}^P\geq x)=\xi(x)$, hence by definition of the $Q$-matrix
\begin{equation}\label{eqip}
  \xi(x+1)=\beta(\xi(x)^2-\xi(x+1)^2),
\end{equation}
hence, necessarily
\[
\xi(x{+}1){=}\frac{-1{+}\sqrt{1{+}4\beta^2\xi(x)^2}}{2\beta},
\]
with initial value $\xi(0){=}1$.  It is easily seen that the sequence $(\xi(x))$ is converging to $0$ so that $\pi$ is indeed a probability distribution on $\N$. The proposition is proved.
\end{proof}
\begin{proposition}\label{ConvInvPow}
 The invariant distribution of $(Q_1^N(t))$ is converging to the unique invariant distribution of $({X}_{\beta}^P(t))$. 
\end{proposition}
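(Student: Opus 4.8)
The plan is to follow the proof of Proposition~\ref{convprop}, with the autonomous limit $(X^R_\beta(t))$ replaced by the mean-field description of Proposition~\ref{thP}. Write $\pi_\beta^{P,N}$ for the invariant distribution of $(Q^N(t))$. By exchangeability, $\E_{\pi_\beta^{P,N}}(Q_1^N){=}F_N/N$, which is bounded in $N$, so Markov's inequality shows that the sequence of one-node marginals of $\pi_\beta^{P,N}$ is tight on $\N$; fix a subsequential limit $\pi$ along some $(N_k)$. First I would establish the a priori bound $\sup_{N}\E_{\pi_\beta^{P,N}}\big(Q_1^N(0)^2\big)<+\infty$. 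Writing the stationary balance relation for $t\mapsto\E\big(Q_1^N(t)^2\big)$ and using that the number of copies node~$1$ receives when another node breaks down is stochastically dominated by a binomial variable with parameters $(Q_m^N,2/(N{-}1))$ — essentially twice the random-policy input of~\eqref{eqr} — the second-moment estimates of the Appendix carry over up to multiplicative constants.

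Then I would use the cycle (Palm) formula for the ergodic Markov process $(Q^N(t))$: for $f$ with finite support,
\[
\E_{\pi_\beta^{P,N}}\big(f(Q_1)\big)=\E_{\widehat\pi_\beta^{P,N}}\!\left(\int_0^{t_1^1} f\big(Q_1^N(s)\big)\,\diff s\right),
\]
where $\widehat\pi_\beta^{P,N}$ is the law of $(Q^N(t))$ at the breakdown instants of node~$1$, so node~$1$ has load $0$ under $\widehat\pi_\beta^{P,N}$ and the cycle length $t_1^1$ is exponential with mean~$1$. Since the breakdown process $(t_n^1)$ of node~$1$ is independent of the marked Poisson processes $\overline{{\cal N}}_m$, $m{\geq}2$, the PASTA property gives that, under $\widehat\pi_\beta^{P,N}$, the coordinates $(Q_i^N(0))_{i\geq2}$ have the same law as under $\pi_\beta^{P,N}$; hence $\Lambda^N(0)$ has, up to an $O(1/N)$ term, the same limit under $\widehat\pi_\beta^{P,N}$ as under $\pi_\beta^{P,N}$, and the second-moment bound above is inherited. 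One then obtains, exactly as in Proposition~\ref{LemW2} for the random policy, that the path-level mean-field convergence underlying Proposition~\ref{thP} (see~\cite{RS2}) still holds when the initial distribution is $\widehat\pi_\beta^{P,N}$.

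Next I would let $N{=}N_k\to{+}\infty$ in the cycle formula. The left-hand side converges to $\langle\pi,f\rangle$. For the right-hand side, the path-level convergence from the Palm law gives that $(\Lambda^N(s))$ converges to a measure-valued process $(\lambda_s)$ with $\lambda_0$ the limit of $\Lambda^N(0)$, while, since $Q_1^N(0){=}0$, $(Q_1^N(s))$ converges in distribution to the process $(X_\beta^P(s))$ started at $0$ whose jump rates are those of Proposition~\ref{thP} driven by $(\lambda_s)$; as $f$ is bounded and $t_1^1$ is integrable, dominated convergence yields
\[
\langle\pi,f\rangle=\E_0\!\left(\int_0^{t_1^1} f\big(X_\beta^P(s)\big)\,\diff s\right).
\]
It remains to identify $(\lambda_s)$ and $\pi$; this is the new point compared with the random case, since the limit process is non-homogeneous and the right-hand side is not automatically an invariant distribution. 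I would argue that a propagation-of-chaos argument at equilibrium forces $\Lambda^N(0)$ to converge, under $\pi_\beta^{P,N}$, to the \emph{deterministic} measure $\pi$; then, $(\Lambda^N(s))$ being stationary, its limit $(\lambda_s)$ is stationary and deterministic, hence constant, $\lambda_s\equiv\pi$; consequently $\pi$ is a fixed point of the mean-field equation obtained from~\eqref{eqpow}, i.e.\ a self-consistent invariant distribution of $(X_\beta^P(t))$, and by the uniqueness in Proposition~\ref{pibeta}, $\pi{=}\pi_\beta^P$. Feeding $\lambda_s\equiv\pi_\beta^P$ back, the last display is exactly the cycle formula for the now time-homogeneous ergodic chain $X_\beta^P$ and is consistent with $\pi{=}\pi_\beta^P$. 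Since every subsequential limit equals $\pi_\beta^P$, the whole sequence $(\pi_\beta^{P,N})$ converges to $\pi_\beta^P$.

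The main obstacle is the identification step, namely proving that $\Lambda^N(0)$ concentrates on the deterministic limit $\pi$ at equilibrium: unlike the transient regime, one cannot simply invoke a law of large numbers for an explicit product-form initial condition, so this requires a genuine propagation-of-chaos-at-equilibrium estimate (for instance via exchangeability and a vanishing pair-correlation bound at stationarity), after which the uniqueness of the invariant distribution established in Proposition~\ref{pibeta} closes the argument. A secondary technical point is the uniform second-moment bound $\sup_{N}\E_{\pi_\beta^{P,N}}(Q_1^N(0)^2)<+\infty$, which is what legitimates invoking the path-level mean-field convergence from the Palm initial law.
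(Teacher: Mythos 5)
The paper itself gives no proof of this proposition --- it is stated with the remark that the proof is omitted and deferred to~\cite{RS2} --- so there is no in-text argument to compare yours against. Your skeleton is the natural one and matches the strategy the paper uses for the Random policy: tightness of the one-node marginals from the first moment, a stationary second-moment bound (your domination of the input to node $i$ by a binomial with parameters $(Q_m^N,2/(N{-}1))$ is a sensible analogue of Lemma~\ref{techprop} and Proposition~\ref{LemW2}), the cycle/Palm formula as in Proposition~\ref{convprop}, passage to the limit along the Palm initial law, and identification of the limit via the uniqueness in Proposition~\ref{pibeta}.

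There is, however, a genuine gap, and you name it yourself: the concentration of the stationary empirical measure $\Lambda^N(0)$ on a deterministic limit. For the Random policy this issue never arises, because the limiting single-node dynamics is an autonomous Markov process (Poissonian input at rate $\beta$ independently of the empirical measure), so Proposition~\ref{convprop} never needs to identify the limit of $\Lambda^N(0)$ under the stationary law. For the Power of Choice policy the limiting jump rates of Proposition~\ref{thP} are functionals of the law of the process itself, so the assertion that every subsequential limit of $(\Lambda^N(0))$ is almost surely deterministic and a fixed point of the mean-field flow \emph{is} essentially the content of Proposition~\ref{ConvInvPow}; writing that this ``requires a genuine propagation-of-chaos-at-equilibrium estimate (for instance via exchangeability and a vanishing pair-correlation bound at stationarity)'' names the theorem rather than proving it. Uniqueness of the fixed point from Proposition~\ref{pibeta} only closes the argument once that chaoticity-at-equilibrium step and the stationarity-passes-to-the-limit step are established --- and these are precisely the technical points the authors outsource to~\cite{RS2}. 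A secondary loose end is the PASTA step: just \emph{after} a breakdown of node $1$ the coordinates $i\geq 2$ have absorbed node $1$'s copies, so their law is not exactly the stationary one; this is harmless at the level of $\Lambda^N$ (an $O(1/N)$ perturbation, as you note) but should be said for the second-moment bound under $\widehat\pi_\beta^{P,N}$, in the spirit of the invariance computation in Proposition~\ref{LemW2}. In short: the architecture is right, but the load-bearing step is missing.
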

The proof is omitted, we refer to~\cite{RS2}. It shows that it is enough to analyze the invariant distribution $\pi_{\beta}^P$ of the limiting process we have just obtained. We can now state the main result of this section which explains the phenomenon observed in the simulations, see Figure~\ref{fig:load_distrib}.

\begin{theorem}[Equilibrium with High Load]\label{ThPow}
  If $\overline{X}_{\beta}^P$ is a random variable with distribution $\pi_{\beta}^P$, then, for the convergence in distribution,
  \[
  \lim_{\beta\to+\infty} \frac{\overline{X}_{\beta}^P}{\beta}=U,
  \]
  where $U$ is a uniformly distributed random variable on $[0,2]$. 
\end{theorem}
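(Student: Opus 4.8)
The plan is to work entirely with the deterministic recursion of Proposition~\ref{pibeta}. Set $\xi_\beta(x)=\pi_\beta^P([x,+\infty))=\P(\overline{X}_\beta^P\ge x)$, so that $\xi_\beta(0)=1$ and $\xi_\beta(x{+}1)=(-1+\sqrt{1+4\beta^2\xi_\beta(x)^2})/(2\beta)$; equivalently $\xi_\beta(x{+}1)$ is the nonnegative root of $\beta t^2+t-\beta\xi_\beta(x)^2=0$, i.e.\ $\xi_\beta(x{+}1)^2=\xi_\beta(x)^2-\xi_\beta(x{+}1)/\beta$. From this last identity and $\xi_\beta(0)=1$ one gets at once, by induction, that $0\le\xi_\beta(x{+}1)\le\xi_\beta(x)\le1$ for all $x$. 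Since $\overline{X}_\beta^P$ is $\N$-valued, $\P(\overline{X}_\beta^P/\beta\ge u)=\xi_\beta(\lceil\beta u\rceil)$ for $u\ge0$, and because the survival function $u\mapsto(1-u/2)^+$ of the uniform law on $[0,2]$ is continuous, it is enough to prove that $\xi_\beta(\lceil\beta u\rceil)\to1-u/2$ for every fixed $u\in[0,2)$; the range $u\ge2$ will then come for free.

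The heart of the argument is a two-sided estimate of $\xi_\beta$. For a lower bound, $\sqrt{1+4\beta^2t^2}\ge2\beta t$ gives $\xi_\beta(x{+}1)\ge\xi_\beta(x)-1/(2\beta)$, hence, \emph{with no restriction on $x$},
\[
\xi_\beta(x)\ \ge\ 1-\frac{x}{2\beta},\qquad x\ge0.
\]
For an upper bound, $\sqrt{1+4\beta^2t^2}-2\beta t=1/(\sqrt{1+4\beta^2t^2}+2\beta t)\le1/(4\beta t)$ gives $\xi_\beta(x{+}1)\le\xi_\beta(x)-1/(2\beta)+1/(8\beta^2\xi_\beta(x))$. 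Now fix $u<2$ and put $n=\lceil\beta u\rceil$. For every $x<n$ the lower bound yields $\xi_\beta(x)\ge1-(n{-}1)/(2\beta)>1-u/2>0$, so summing the per-step upper bound over $x=0,\dots,n{-}1$ and using $\xi_\beta(0)=1$ gives
\[
1-\frac{n}{2\beta}\ \le\ \xi_\beta(n)\ \le\ 1-\frac{n}{2\beta}+\frac{n}{8\beta^2(1-u/2)}.
\]
As $\beta\to+\infty$ one has $n/(2\beta)\to u/2$ and $n/(8\beta^2(1-u/2))\to0$, so the sandwich forces $\xi_\beta(\lceil\beta u\rceil)\to1-u/2$.

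Finally, for $u\ge2$ the function $u\mapsto\P(\overline{X}_\beta^P/\beta\ge u)$ is non-increasing, so for any $v<2$ one has $\limsup_\beta\P(\overline{X}_\beta^P/\beta\ge2)\le\lim_\beta\P(\overline{X}_\beta^P/\beta\ge v)=1-v/2$; letting $v\uparrow2$ gives $\P(\overline{X}_\beta^P/\beta\ge2)\to0$, hence by monotonicity $\P(\overline{X}_\beta^P/\beta\ge u)\to0$ for all $u\ge2$. Thus the survival functions $u\mapsto\P(\overline{X}_\beta^P/\beta\ge u)$ converge at every $u\ge0$ to $(1-u/2)^+$, and since this limit is continuous this is exactly convergence in distribution towards $U$, which proves the theorem. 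The only real bookkeeping is the summation in the upper bound; the point to watch is that the per-step error $1/(8\beta^2\xi_\beta(x))$ must be controlled via the \emph{unconditional} lower bound on $\xi_\beta$ rather than by a circular appeal to the limit, and that this control degrades near the edge $x\approx2\beta$, which is precisely why the case $u\ge2$ is settled by monotonicity instead of by the estimate.
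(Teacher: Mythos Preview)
Your argument is correct and takes a genuinely different route from the paper's proof. The paper does not estimate the recursion directly; instead it sums the relation $\xi_\beta(k{+}1)=\beta\big(\xi_\beta(k)^2-\xi_\beta(k{+}1)^2\big)$ to obtain the identity $\xi_\beta(x)^2=\beta^{-1}\E\big[(\overline{X}_\beta^P-x)^+\big]$, proves a uniform second-moment bound $\E\big[(\overline{X}_\beta^P/\beta)^2\big]\le 2(1{+}2/\beta)$, deduces tightness and uniform integrability of $\overline{X}_\beta^P/\beta$, and then shows that any subsequential limit $Y$ must satisfy $\P(Y\ge x)^2=\int_x^\infty\P(Y>s)\,\diff s$, from which the uniform law on $[0,2]$ is recovered by solving the ODE $2f'f=-f$.

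Your approach is more elementary and more quantitative: by sandwiching each step of the recursion via $-1/(2\beta)\le\xi_\beta(x{+}1)-\xi_\beta(x)\le -1/(2\beta)+1/(8\beta^2\xi_\beta(x))$ and bootstrapping the error term with the unconditional lower bound, you get an explicit $O(1/\beta)$ rate of convergence of the survival function, uniformly on compact subsets of $[0,2)$, and you never need compactness or subsequential limits. The paper's route, by contrast, is more structural: the limiting functional equation $f(x)^2=\int_x^\infty f$ makes transparent \emph{why} the uniform law emerges and would carry over with little change to variants (e.g.\ power of $k$ choices) where a direct step-by-step estimate on the recursion is less convenient.
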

\begin{proof}
  In the proof of Proposition~\ref{pibeta}, we have seen that, by Equation~\eqref{eqip}, for $k\geq 0$,
  \begin{equation}\label{eqa2}
  \P\left(X_{\beta}^P{\geq}k{+}1\right){=}\beta\left(\P\left(X_{\beta}^P\geq k\right)^2{-}\P\left(X_{\beta}^P{\geq} k{+}1\right)^2\right),
  \end{equation}
  by summing these equations, one obtains
  \[
  \E\left(X_{\beta}^P {\wedge} x \right)=\beta\left(1-\P\left(X_{\beta}^P \geq x\right)^2\right),
  \]
where $a{\wedge}b=\min(a,b)$.  Hence, as expected, $\E(X_{\beta}^P){=}\beta$, and  therefore
  \begin{equation}\label{eqa1}
  \P\left(X_{\beta}^P \geq x\right)^2=\frac{1}{\beta}\left(\E\left(X_{\beta}^P\right)- \E\left(X_{\beta}^P {\wedge} x \right)\right)
  =\frac{1}{\beta}\E\left(\left(X_{\beta}^P- x \right)^+\right).
  \end{equation}
  By multiplying Equation~\eqref{eqa2} by $k{+}1$ and by summing up, one gets
\[
   \sum_{k=1}^{x} k \,\P\left(X_{\beta}^P{\geq} k\right) {=}\beta \sum_{k=0}^{x{-}1}  \P\left(X_{\beta}^P\geq k\right)^2{+}\beta\left(1{-}\P\left(X_{\beta}^P{\geq} x\right)^2\right).
\]
The right hand side of this relation is bounded by
\[
    \beta\left(\sum_{k=0}^{{+}\infty} \P\left(X_{\beta}^P\geq k\right)+1\right)=\beta(\beta+2)
\]
hence, by using Fubini's Theorem on the left hand side,
\[
\E\left(\left(X_{\beta}^P\right)^2\right)  {\leq} 2\beta(\beta{+}2),
\]
 so that 
\[
\sup_{\beta>0}\E\left(\left(\frac{X_{\beta}^P}{\beta}\right)^2\right){<}{+}\infty
\]
holds. In particular the family of random variables
\[
(Y_{\beta})\stackrel{\text{\rm def.}}{=}\left(\frac{X_{\beta}^P}{\beta}\right)
\]
is tight when $\beta$ goes to infinity. Let $Y$ be one of its limiting points, 
\[
  \P\left(Y_{\beta}  \geq x\right)^2=\E\left(\left(Y_{\beta}- \frac{\lceil x\beta\rceil}{\beta} \right)^+\right).
  \]
The uniform integrability property of $(Y_{\beta})$, consequence of the boundedness of the second moments, gives that $Y$ satisfies necessarily the relation
  \[
    \P\left(Y  \geq x\right)^2=\E\left(\left(Y- x\right)^+\right)=\int_x^{+\infty} \P(Y>s)\,\diff s.
    \]
    The function $f(x){=}\P(Y\geq x)$ is thus differentiable and satisfies the differential equation
    \[
    2f'(x)f(x){=}{-}f(x),
    \]
    for $x{\geq}0$, so that $f'(x){=}{-}1/2$ when $f(x){\not=}0$. One obtains the solution
    \[
    \P(Y\geq x)=\frac{(2-x)^+}{2},\quad x{\geq}0,
    \]
    with $a^+=\max(a,0)$, $Y$ is a uniformly distributed random variable on the interval $[0,2]$. The family of random variables $(Y_{\beta})$ has therefore  a unique limiting point when $\beta$ goes to infinity. One deduces the convergence in distribution. The theorem is proved.

\end{proof}

\section{Conclusion}\label{SecConc}
Our investigations through simulations and mathematical models have shown that
\begin{itemize}
\item[---] a simple, random placement strategy  may lead to heavily unbalanced situations;
\item[---] Classical load balancing techniques, like choosing the least loaded nodes are optimal from the point of view of placement. They have the drawback of requiring a detailed information on the state of the network, hence a significant cost in terms of complexity and bandwidth.
\item[---]  the \textit{power of two random choices}   policy has the advantage of having good performance with a limited cost in terms of storage space and of complexity.
\end{itemize}

\appendix

\section{Convergence Results}
The technical results of this section concern the random allocation scheme. The notations of the corresponding section are used. 
\begin{proposition}\label{propinc}
The previsible increasing process of the martingale $(M_1^N(t))$ is 
\begin{multline}\label{eq4}
\croc{M_1^N}(t)=\int_0^tL_1^N(s)^2\,\diff s \\+\sum_{m=2}^N \int_0^t\left[\frac{1}{(N{-}1)^2}L_m^N(s)^2+ \frac{N{-}2}{(N{-}1)^2} L_m^N(s)\right]\,\diff s.
\end{multline}
\end{proposition}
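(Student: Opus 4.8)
The plan is to exploit the structure of $M_1^N$ as a sum of stochastic integrals against the $N$ mutually independent compensated marked Poisson processes $\overline{{\cal N}}_m$, $1{\leq} m{\leq} N$. Write $M_1^N{=}M_{1,1}^N{+}\sum_{m=2}^N M_{1,m}^N$, where
\[
M_{1,1}^N(t)={-}\int_0^t L_1^N(s{-})\left[\overline{{\cal N}}_1(\diff s,{\cal M}_U){-}\diff s\right]
\]
and, for $m{\geq} 2$,
\[
M_{1,m}^N(t)=\int_0^t\left[\int_{{\cal M}_U}z_1(L_m^N(s{-}),\overline{u})\,\overline{{\cal N}}_m(\diff s,\diff\overline{u}){-}\frac{L_m^N(s)}{N{-}1}\,\diff s\right].
\]
Each summand is a martingale, and since the $\overline{{\cal N}}_m$ are independent and, almost surely, have no common jump times, the previsible covariation of two distinct summands vanishes. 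Hence $\croc{M_1^N}{=}\croc{M_{1,1}^N}{+}\sum_{m=2}^N\croc{M_{1,m}^N}$, and it suffices to compute each term on the right.

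Next I would invoke the standard expression for the previsible quadratic variation of an integral against a compensated Poisson random measure: if $H$ is previsible and $\widetilde{{\cal N}}_m$ denotes the compensated version of $\overline{{\cal N}}_m$, then $\croc{\int H\,\diff\widetilde{{\cal N}}_m}(t){=}\int_0^t\int_{{\cal M}_U}H(s,\overline{u})^2\,\mu_m(\diff\overline{u})\,\diff s$, where $\mu_m$ is the law of the mark sequence $\overline{U}^m_n$ and the underlying Poisson clock has rate $1$. For the first term the integrand ${-}L_1^N(s{-})$ does not depend on the mark, so $\croc{M_{1,1}^N}(t){=}\int_0^t L_1^N(s)^2\,\diff s$. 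For $m{\geq} 2$ the integrand is $z_1(L_m^N(s{-}),\overline{u})$; conditionally on $L_m^N(s{-}){=}\ell$, since $\overline{u}{=}(u_p)$ is an i.i.d.\ sequence uniform on $\{1,\dots,N\}{\setminus}\{m\}$ (and $1{\neq} m$), the variable $z_1(\ell,\overline{u}){=}\sum_{p=1}^\ell\ind{u_p=1}$ is binomial with parameters $\ell$ and $1/(N{-}1)$, whence its second moment equals $\ell^2/(N{-}1)^2{+}(N{-}2)\ell/(N{-}1)^2$. Substituting $\ell{=}L_m^N(s{-})$ and integrating in $s$ gives exactly the $m$-th term in~\eqref{eq4}, and summing over $m$ produces the claimed formula.

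The only delicate points are of a bookkeeping nature: justifying the additivity of the previsible covariations through the independence and the absence of simultaneous jumps of the $\overline{{\cal N}}_m$, and identifying the mark intensity correctly so that the quadratic variation formula applies; everything else is the elementary evaluation of the second moment of a binomial random variable. A more self-contained route would be to compute first the optional quadratic variation $[M_1^N](t)$ as the sum of the squared jumps of $M_1^N$ over the jump instants of all the $\overline{{\cal N}}_m$, namely $\sum_{t_n^1\leq t}L_1^N(t_n^1{-})^2{+}\sum_{m=2}^N\sum_{t_n^m\leq t}z_1(L_m^N(t_n^m{-}),\overline{U}^m_n)^2$, and then take its compensator, replacing each point-process sum by its previsible compensator; this leads to the same conditional second moments and hence to the same identity.
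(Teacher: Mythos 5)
Your proposal is correct and follows essentially the same route as the paper: decompose $M_1^N$ into the $N$ martingales driven by the independent marked Poisson processes so that the brackets add, and reduce each term for $m\geq 2$ to the second moment of a binomial random variable with parameters $L_m^N(s{-})$ and $1/(N{-}1)$. The only cosmetic difference is that you invoke the general formula for the angle bracket of an integral against a compensated Poisson random measure, whereas the paper identifies the compensator by computing $\E\left(M_{1,m}^N(t)^2\right)$ directly and appealing to the independent-increments property; both arguments hinge on exactly the same computation.
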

Concerning previsible increasing processes of martingales, see Section~VI-34 page~377 of Rogers and Williams~\cite{Rogers2}.
\begin{proof}
  The proof is not difficult, it is included for the sake of completeness for readers not familiar with the properties of martingales associated to marked Poisson point processes.   The previsible increasing process of the martingale 
\[
\left(\int_0^t L_1^N(s{-})\left[ \rule{0mm}{4mm}\overline{{\cal N}}_{1}(\diff t,{\cal M}_U )-\diff s\right]\right)
\text{ is } 
\left(\int_0^t L_1^N(s)^2\,\diff s\right),
\]
see Theorem~(28.1) page~50 of ~\cite{Rogers2}. By independence of the Poisson processes, it is enough to calculate the previsible increasing process of the martingale
\begin{multline*}
M_{1,m}^N(t)\stackrel{\text{\rm def.}}{=}\notag  \int_0^t \left[\int_{{\cal M}_U} z_1\left(L_m^N(s{-}),\overline{u}\right)\, \overline{{\cal N}}_{m}(\diff s,\diff \overline{u})- \frac{ L_m^N(s)}{N{-}1}\, \diff s\right]\\
=\sum_{t^m_n\leq t }\sum_{p=1}^{L_m^N(t_n^m{-})} \ind{U_p^{m,n}=1}-\int_0^t \frac{ L_m^N(s)}{N{-}1}\diff s,\label{eql}
\end{multline*}
for $2{\leq} m{\leq} N$. It is sufficient in fact to show that  the second moment of this martingale is such that
\[
\E\left(M_{1,m}^N(t)^2\right){=}\int_0^t\left[\frac{1}{(N{-}1)^2}\E\left(L_m^N(s)^2\right){+}\frac{N{-}2}{(N{-}1)^2} \E\left(L_m^N(s)\right)\right]\diff s,
\]
the property of independent increments of Poisson processes will then give the martingale property of $M_{1,m}^N(t)^2$ minus this term. By integrating with respect to the values of $(U_p^{m,n})$, one has
\[
\E\left(\left(\sum_{p=1}^{L_m^N(t_n^m{-})} \ind{U_p^{m,n}=1}\right)^2\right)=
\frac{N{-}2}{(N{-}1)^2}\E\left(L_m^N(t_n^m{-})\right)+ \frac{1}{(N{-}1)^2}\E\left(L_m^N(t_n^m{-})^2\right),
\]
which gives the relation
\begin{multline*}
  \E\left(\left(\sum_{t^m_n\leq t }\sum_{p=1}^{L_m^N(t_n^m{-})} \ind{U_p^{m,n}=1}\right)^2\right)
\\=\frac{1}{(N-1)^2}\E\left(\left(\int_0^t L_m^N(s{-})\,{\cal N}_m(\diff s)\right)^2\right)+\frac{N{-}2}{(N{-}1)^2}\E\left(\int_0^t L_m^N(s{-})\,{\cal N}_m(\diff s)\right).
\end{multline*}

In the same way, by integrating with respect to the values of $(U_p^{m,n})$, with the notation ${\cal N}_{m}(\diff s){=}\overline{{\cal N}}_{m}(\diff s,{\cal M}_U)$, 
\[
\E\left(\int_0^t \frac{ L_m^N(s)}{N{-}1}\diff s\sum_{t^m_n\leq t }\sum_{p=1}^{L_m^N(t_n^m{-})} \ind{U_p^{m,n}=1}\right)
=\E\left(\int_0^t \frac{ L_m^N(s)}{N{-}1}\diff s \int_0^t \frac{ L_m^N(s)}{N{-}1}\,{\cal N}_{m}(\diff s) \right).
\]
By using the last two relations one gets
\begin{multline*}
\E\left(M_{1,m}^N(t)^2\right)=\\
  \frac{1}{(N{-}1)^2} \E\left(\left(\int_0^t L_m^N(s)\left[{\cal N}_{m}(\diff s) -\diff s\right]\right)^2\right)
  +\frac{N{-}2}{(N{-}1)^2}\E\left(\int_0^t L_m^N(s{-})\,{\cal N}_m(\diff s)\right).
\end{multline*}
Since the martingale $({\cal N}_{m}([0,t]{-}t)$ associated to a Poisson process with rate $1$  has the increasing previsible process $(t)$, one gets
\[
  \E\left(M_{1,m}^N(t)^2\right)=
  \frac{1}{(N{-}1)^2} \int_0^t \E\left(L_m^N(s)^2\right)\diff s
  +\frac{N{-}2}{(N{-}1)^2}\int_0^t \E\left(L_m^N(s{-})\right)\diff s.
\]
The proposition is proved.
\end{proof}

\begin{lemma}\label{techprop} If the initial distribution of $(L_i^N(t))$  satisfies Condition~\eqref{IniCond} then, for any $T{>}0$,
   \begin{equation}\label{eq1}
\sup_{ N\geq 1}\,\sup_{0\leq t\leq T} \E\left((L_1^N(t))^2\right)<+\infty. 
  \end{equation}
and 
  \begin{equation}\label{eq2}
    \sup_{N\geq 1}\E\left(\sup_{0\leq s\leq T}L_1^N(s)\right) <+\infty.
  \end{equation}
\end{lemma}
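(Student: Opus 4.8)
The plan is to prove the two bounds separately; in both, the only structural facts I use are the conservation law $\sum_{i=1}^N L_i^N(t){=}F_N$, the exchangeability of the coordinates (hence $\E(L_i^N(t)){=}F_N/N$ for every $i$ and $t{\geq}0$), and the dynamics~\eqref{eqr}.

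For~\eqref{eq1}, I would apply Dynkin's formula to the coordinate process with the quadratic function $f(\ell){=}\ell_1^2$. Since every admissible state satisfies $\ell_1{\leq}\sum_k\ell_k{=}F_N$, this $f$ is bounded on the (invariant) state space, so the formula applies without any localisation. Writing $\Omega$ for the generator of $(L^N(t))$, Relation~\eqref{eqr} gives $\Omega f(\ell){=}{-}\ell_1^2{+}\sum_{m\neq 1}\E_{\overline{u}}[(\ell_1{+}z_1(\ell_m,\overline{u}))^2{-}\ell_1^2]$, and, since under the law of the marks $z_1(\ell_m,\overline{u})$ is binomial with parameters $\ell_m$ and $1/(N{-}1)$, a short computation gives
\[
\Omega f(\ell)\leq -\ell_1^2+\frac{2\ell_1 F_N}{N{-}1}+\frac{F_N}{N{-}1}+\frac{1}{(N{-}1)^2}\sum_{m\neq 1}\ell_m^2 .
\]
The key step is to bound the right-hand side by a constant that is uniform in $N$: I would absorb the negative term with the elementary inequality $2ab{-}a^2{\leq}b^2$ (taking $b{=}F_N/(N{-}1)$) and bound the last sum crudely by $\sum_{m\neq 1}\ell_m^2{\leq}(\sum_{m\neq 1}\ell_m)^2{\leq}F_N^2$, which yields $\Omega f(\ell){\leq}C_N$ with $C_N{=}2F_N^2/(N{-}1)^2{+}F_N/(N{-}1)$, and $\sup_{N\geq 2}C_N{<}{+}\infty$ because $C_N$ converges (to $2\beta^2{+}\beta$). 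Dynkin's formula then gives $\E(L_1^N(t)^2){=}\E(L_1^N(0)^2){+}\int_0^t\E(\Omega f(L^N(s)))\,\diff s{\leq}\E(L_1^N(0)^2){+}C_Nt$, and~\eqref{eq1} follows from Condition~\eqref{IniCond}. (Keeping a fraction of the $-\ell_1^2$ term after a Young-type splitting yields $\E(\Omega f(L^N(s))){\leq}{-}\tfrac12\E(L_1^N(s)^2){+}\tilde C_N$, and a Gronwall inequality then gives the bound uniformly in $t{\geq}0$.)

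For~\eqref{eq2}, I would avoid martingale maximal inequalities and use a pathwise domination. The process $(L_1^N(t))$ decreases only at the breakdown instants $(t_n^1)$ of node~$1$, where it jumps to $0$, and is non-decreasing between consecutive such instants; hence its value at any time $s{\leq}T$ is at most $L_1^N(0)$ plus the total number ${\cal P}^N[0,T]$ of copies ever routed to node~$1$ on $[0,T]$:
\[
\sup_{0\leq s\leq T}L_1^N(s)\leq L_1^N(0)+{\cal P}^N[0,T].
\]
Now $\E(L_1^N(0)){=}F_N/N$ by exchangeability, while ${\cal P}^N[0,t]$ has previsible compensator $C_P^N(t){=}\tfrac{1}{N{-}1}\sum_{m=2}^N\int_0^t L_m^N(s)\,\diff s$; taking expectations — localising along the jump times of ${\cal P}^N$ for integrability, then using $\E(L_m^N(s)){=}F_N/N$ — gives $\E({\cal P}^N[0,T]){=}\E(C_P^N(T)){=}(F_N/N)T$. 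Therefore $\E(\sup_{0\leq s\leq T}L_1^N(s)){\leq}(F_N/N)(1{+}T)$, which is bounded uniformly in $N$ since $F_N/N{\to}\beta$.

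Neither estimate presents a genuine difficulty: these are exactly the standard a priori bounds on which the rest of the analysis rests. The only things to get right are that the constants produced at each step are independent of $N$ — which is why for~\eqref{eq1} the crude bound $\sum_{m\neq1}\ell_m^2{\leq}F_N^2$, rather than the sharp identity $\E(\sum_{m\neq1}\ell_m^2){=}(N{-}1)\E(L_1^N(s)^2)$, is the convenient choice, since it is pointwise and uniform in $N$ — and, for~\eqref{eq2}, to use the pathwise domination: it is cheaper than the Doob-type argument one might first reach for, because the running maximum of the non-decreasing process ${\cal P}^N[0,\cdot]$ is simply its terminal value ${\cal P}^N[0,T]$ and needs no further estimate.
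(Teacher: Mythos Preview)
Your proof is correct and, in both halves, takes a somewhat simpler route than the paper.

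For~\eqref{eq1}, the paper writes the SDE for $L_1^N(t)^2$, takes expectations, and then handles the troublesome term $\frac{1}{(N{-}1)^2}\sum_{m\neq 1}\E(L_m^N(s)^2)$ via exchangeability, replacing it by $\frac{1}{N{-}1}f_N(s)$ with $f_N(s){=}\E(L_1^N(s)^2)$; the resulting integral inequality is then closed by Gronwall. You instead bound $\sum_{m\neq 1}\ell_m^2$ pointwise by $F_N^2$, which kills the feedback term entirely and gives $\E(L_1^N(t)^2)\leq \E(L_1^N(0)^2)+C_N t$ without any Gronwall step. Your bound grows linearly in $T$ whereas the paper's Gronwall bound $p^2(1{+}2T)e^{T/(N{-}1)}$ also does (the exponential factor is harmless since $T/(N{-}1)\to 0$); both are adequate for the lemma. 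The paper's route has the mild advantage that the $1/(N{-}1)$ prefactor makes the extra term genuinely vanish as $N\to\infty$, which is closer in spirit to the later mean-field estimates; yours has the advantage of being a one-line computation.

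For~\eqref{eq2}, the paper goes through the decomposition~\eqref{eq3}, bounds the martingale supremum by Doob's $L^2$ inequality (using~\eqref{eq4} and the just-proved~\eqref{eq1}), and then applies Gronwall once more. Your pathwise domination $\sup_{s\leq T}L_1^N(s)\leq L_1^N(0)+{\cal P}^N[0,T]$ is both correct---between failures of node~$1$ the process is non-decreasing, and at failures it resets to $0$---and considerably shorter: since ${\cal P}^N$ is non-decreasing, its supremum is its terminal value, and $\E({\cal P}^N[0,T])=\E(C_P^N(T))=(F_N/N)T$ by exchangeability (the a priori bound $L_m^N\leq F_N$ justifies the expectation identity). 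This bypasses both Doob and Gronwall. The paper's argument, on the other hand, is more robust to variants of the model where the monotonicity between resets might fail.
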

\begin{proof}
With Relation~\eqref{eqr},  by writing the SDE satisfied by $(L_1^N(t)^2)$,
  \begin{multline*}
    L_1^N(t)^2=L_1^N(0)^2-\int_0^t L_1^N(s{-})^2\,{\cal N}_1(\diff s,{\cal M}_U)
\\+\sum_{m=2}^N \int_0^t \int_{{\cal M}_U}z_1\left(L_m^N(s{-}),\overline{u}\right)\times \left[2 L_1^N(s{-})+z_1\left(L_m^N(s{-}),\overline{u}\right)\right] \overline{{\cal N}}_{m}(\diff s,\diff \overline{u})
  \end{multline*}
by taking the expectation, one obtains
\begin{multline*}
  \E\left(L_1^N(t)^2\right)=  \E\left(L_1^N(0)^2\right)-\int_0^t   \E\left(L_1^N(s)^2\right)\,\diff s\\
  +\int_0^t \E\sum_{m=2}^N2\frac{L_m^N(s)}{N{-}1} L_1^N(s) \,\diff s+ \int_0^t\sum_{m=2}^N \E\left(\frac{L_m^N(s)(L_m^N(s){-1})}{(N{-}1)^2}+\frac{L_m^N(s)}{N{-}1}\right)\,\diff s
\end{multline*}
By using the fact that the $L_m^N(t)$'s have the same distribution and  their sum is $F_N$,  if
\[
f_N(t)=\E\left(L_1^N(t)^2\right),
\]
Equation~\eqref{eq3} gives that, for $0\leq t\leq T$,
\[
f_N(t)\leq (2T+1)\left\lceil \frac{F_N}{N}\right\rceil^2 +
\frac{1}{N{-}1}\int_0^t f_N(s)\,\diff s.
\]
If $p\in\N$ such that $F_N/N\leq p$ for all $N$, then, by Gronwall's Inequality, see Ethier and Kurtz~\cite{Ethier}~p.498,
\[
f_N(t)\leq p^2(1+2T)e^{ T/(N{-}1)}, \quad \forall N\geq 2. 
\]
Relation~\eqref{eq1} is proved.

\noindent Denote by $$S_m^N(t)=\sup\left(L_m^N(s):0\leq s\leq t\right)$$ then, by Equation~\eqref{eq3}, for $t{\leq}T$,
\[
S_1^N(t)\leq L_1^N(0)+\frac{1}{N{-}1}\sum_{m=2}^N \int_0^t S_m^N(s)\,\diff s+\sup_{0\leq s\leq T} |M_1^N(s)|.
\]
with the help of Doob's Inequality, see Theorem~(52.6) of Rogers and Williams~\cite{Rogers}, one gets
\[
\E\left(\sup_{0\leq s\leq T} M_1^N(s)^2\right)\leq 2\E\left(M_1^N(T)^2\right)=2\E\left(\croc{M_1^N}(T)\right)
\]
and this last quantity is bounded with respect to $N\geq 2$ by Relations~\eqref{eq4} and~\eqref{eq1}. Hence, by using the previous inequality, one can find a constant $K_0$ such that, for any $N{\geq} 2$ and $t{\leq}T$,
\[
\E\left(S_1^N(t)\right)\leq K_0 +\int_0^t \E\left(S_1^N(s)\right)\,\diff s,
\]
one concludes again with Gronwall's Inequality. The lemma is proved. 
\end{proof}

\begin{lemma}\label{LemW}
  If the initial condition $(L_j^N(0))$ is such that the variables $L_j^N(0)$, $j\geq 2$ are exchangeable and that
  \[
  \sup_{N\geq 1} \E\left(L_1^N(0)^2\right)+\E\left(L_2^N(0)^2\right)<+\infty
  \]
holds,   then, for all $T\geq 0$, 
  \[
  \sup_{N\geq 1}\sup_{0\leq t\leq T} \E\left(L_1^N(t)^2\right)+\E\left(L_2^N(t)^2\right)<+\infty,
  \]
 \end{lemma}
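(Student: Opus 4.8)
The statement is the asymmetric analogue of Lemma~\ref{techprop}: node~$1$ is now singled out and only $(L_j^N(t))_{j\geq 2}$ is exchangeable, so the conservation law $\sum_{m=1}^N L_m^N(t)=F_N$ can no longer be combined with full symmetry of all coordinates. The plan is to run the second-moment estimate of the proof of Lemma~\ref{techprop} on the \emph{pair} of quantities $f_N(t)=\E(L_1^N(t)^2)$ and $h_N(t)=\E(L_2^N(t)^2)$ at the same time, using that the dynamics preserves exchangeability of $(L_j^N(t))_{j\geq 2}$ — the marked Poisson processes $\overline{{\cal N}}_j$, $j\geq 2$, enter Relation~\eqref{eqr} symmetrically, and the initial law is symmetric in these coordinates by hypothesis — so that $\E(L_m^N(t)^2)=h_N(t)$ for every $m\geq 2$.

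First I would write, exactly as in the proof of Lemma~\ref{techprop}, the SDE for $(L_1^N(t)^2)$ coming from Relation~\eqref{eqr}, take expectations and drop the martingale term. Using $\sum_{m=2}^N L_m^N(s)=F_N-L_1^N(s)\leq F_N$, that the conditional law of $z_1(L_m^N(s{-}),\overline{u})$ given $L_m^N(s{-})$ is binomial with parameters $(L_m^N(s{-}),1/(N{-}1))$, and $\E(L_1^N(s))\leq\E(L_1^N(s)^2)=f_N(s)$ (the $L_i^N(s)$ are $\N$-valued), one gets an inequality
\[
f_N(t)\leq f_N(0)+\int_0^t\left[-f_N(s)+2p\,f_N(s)+h_N(s)+p\right]\,\diff s,
\]
valid for all $N\geq 2$, where $p$ is an integer with $F_N/(N{-}1)\leq p$ for every $N$ (such $p$ exists since $F_N/N\to\beta$). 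The computation for $(L_2^N(t)^2)$ is identical except that node~$1$ now occurs among the ``senders''; isolating that single term and applying exchangeability to the other $N-2$ of them gives
\[
h_N(t)\leq h_N(0)+\int_0^t\left[-h_N(s)+2p\,h_N(s)+\frac{1}{(N{-}1)^2}f_N(s)+\frac{N{-}2}{(N{-}1)^2}h_N(s)+p\right]\,\diff s,
\]
and for $N\geq 2$ the last two terms are bounded by $f_N(s)+h_N(s)$.

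Adding the two inequalities and setting $g_N(t)=f_N(t)+h_N(t)$, all the $N$-dependent coefficients are dominated by absolute constants, so $g_N(t)\leq g_N(0)+\int_0^t\left(c_1+c_2\,g_N(s)\right)\diff s$ on $[0,T]$, with $c_1,c_2$ depending only on $p$. Gronwall's inequality (Ethier and Kurtz~\cite{Ethier}, p.~498) then yields $g_N(t)\leq (g_N(0)+c_1T)e^{c_2T}$ for $t\leq T$, and since $\sup_{N}g_N(0)<+\infty$ by hypothesis this is the desired uniform bound. I expect the only delicate point to be the bookkeeping in the two second-moment SDEs — keeping each cross term $2\sum_m(L_m^N(s)/(N{-}1))L_i^N(s)$ and each quadratic term $\sum_m z_i^2$ attached to the correct index set $\{1,\dots,N\}\setminus\{i\}$ and correctly re-expressed through $f_N$, $h_N$ and the conservation law — rather than any new probabilistic ingredient beyond those already used for Lemma~\ref{techprop}.
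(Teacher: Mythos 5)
Your proof is correct and follows essentially the same route as the paper: both set up the coupled integral inequalities for $\E(L_1^N(t)^2)$ and $\E(L_2^N(t)^2)$ using the second-moment SDE, the conservation law and the exchangeability of the coordinates $j\geq 2$, and then close with Gronwall's inequality. The only cosmetic difference is that you sum the two inequalities and apply Gronwall once to $f_N+h_N$, whereas the paper applies Gronwall to the first inequality and then again after substituting into the second; both are valid.
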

\begin{proof}
  The proof is similar to the proof of Lemma~\ref{techprop}. One has to introduce the functions
  \[
  f^1_N(t)=\E\left(L_1^N(t)^2\right) \text{ and } f^2_N(t)=\E\left(L_2^N(t)^2\right),
  \]
  by using an integral equation for $(L_1^N(t)^2)$ and $(L_1^N(t)^2)$  and the symmetry properties of the vector $(L_j^N(t), j{\geq 2})$, one obtains the relations
  \[
  \begin{cases}
\displaystyle    f_N^1(t)\leq  C_1+A_1\int_0^t f_N^1(s)\,\diff s+B_1\int_0^t f_N^2(s)\,\diff s,\\\ \\ 
\displaystyle    f_N^2(t)\leq  C_2+A_2\int_0^t f_N^1(s)\,\diff s+B_2\int_0^t f_N^2(s),
  \end{cases}
  \]
  for convenient positive constants $A_i$, $B_i$, $C_i$, $i=1$, $2$ independent of $N$. One uses Gronwall's Inequality for the first relation to get an upper bound on $f_N^1$,
  \[
  f_N^1(t)\leq  \left(C_1+B_1\int_0^t f_N^2(s)\,\diff s\right) e^{A_1 t}
  \]
  and Gronwall's Inequality is again used after plugging this relation in  the second inequality. 
\end{proof}
The next result is a technical extension of Proposition~\ref{PoisComp} used to prove Proposition~\ref{convprop}. 
\begin{proposition}\label{LemW2}
  If $\widehat{\pi}_{\beta}^N$ is the invariant distribution  of the state of the network at the instants of failures of node $1$, then, with the notations of Section~\ref{SecMath},  for the convergence in distribution, 
  \[
\lim_{N\to+\infty} \left(C_N^P(t)\right)=(\beta t)
\]
if the initial distribution of $(L^N(t))$ is $\widehat{\pi}_{\beta}^N$. 
\end{proposition}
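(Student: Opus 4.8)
The plan is to bypass the second‑moment computation used in the proof of Proposition~\ref{PoisComp} by exploiting the conservation of the total number of copies. Since $\sum_{i=1}^N L_i^N(s){=}F_N$ for every $s$, one has $\sum_{m=2}^N L_m^N(s){=}F_N{-}L_1^N(s)$, and therefore
\[
C_P^N(t){=}\frac1{N{-}1}\int_0^t\bigl(F_N{-}L_1^N(s)\bigr)\diff s{=}\frac{F_N}{N{-}1}\,t{-}\frac1{N{-}1}\int_0^t L_1^N(s)\,\diff s .
\]
The first term converges to $\beta t$ because $F_N/(N{-}1)\to\beta$, so it remains to show that the nonnegative process $W_N(t){=}(N{-}1)^{-1}\int_0^t L_1^N(s)\,\diff s$ converges to $0$ in probability at each fixed $t$, and then to promote this to convergence in distribution of processes.

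First I would bound $g_N(t){=}\E_{\widehat\pi_\beta^N}(L_1^N(t))$ uniformly in $N$ and $t$. Taking expectations in Equation~\eqref{eq3}, using that $(M_1^N(t))$ is a mean‑zero martingale (for fixed $N$ the state is bounded by $F_N$, so this holds whatever the initial law), that $L_1^N(0){=}0$ under $\widehat\pi_\beta^N$, and the conservation identity $\tfrac1{N{-}1}\sum_{m=2}^N\E(L_m^N(s)){=}(F_N{-}g_N(s))/(N{-}1)$, one sees that $g_N$ solves the linear ODE $g_N'(t){=}{-}\tfrac{N}{N{-}1}g_N(t){+}\tfrac{F_N}{N{-}1}$ with $g_N(0){=}0$, whose solution satisfies $0{\leq}g_N(t){\leq}F_N/N$ for all $t{\geq}0$. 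Consequently $\E_{\widehat\pi_\beta^N}(W_N(t)){=}(N{-}1)^{-1}\int_0^t g_N(s)\,\diff s{\leq}t F_N/\bigl(N(N{-}1)\bigr){\to}0$, and since $W_N(t){\geq}0$, Markov's inequality gives $W_N(t)\to0$ in probability; hence $C_P^N(t)\to\beta t$ in probability for every fixed $t$.

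To upgrade this to convergence in distribution of processes I would argue as at the end of the proof of Proposition~\ref{PoisComp}: the deterministic estimate $|C_P^N(t){-}C_P^N(s)|{=}|W_N(t){-}W_N(s)|{\leq}\tfrac{F_N}{N{-}1}|t{-}s|$ (using $L_1^N{\leq}F_N$) gives the modulus of continuity criterion of Theorem~7.2 p.~81 of Billingsley~\cite{Billingsley}, so $(C_P^N(t))$ is tight with continuous limit points; combined with the convergence in probability of the one‑dimensional marginals to the deterministic values $\beta t$, this identifies $(\beta t)$ as the unique limiting process and yields the claim. The only delicate point along this route is the uniform bound on $g_N$; if instead one tried to follow the proof of Proposition~\ref{PoisComp} verbatim, controlling the second moment of $C_P^N(t)$ via Lemma~\ref{LemW}, the main obstacle would be to first verify that $\widehat\pi_\beta^N$ has a second moment bounded uniformly in $N$ — which can be done by writing the stationary equation for $\E_{\pi_\beta^N}(L_1^N(t)^2)$ from the SDE satisfied by $(L_1^N(t)^2)$, using $\sum_{i,j}\E_{\pi_\beta^N}(L_i^N L_j^N){=}F_N^2$ to eliminate $\E_{\pi_\beta^N}(L_1^N L_2^N)$, and then transferring the bound to $\widehat\pi_\beta^N$ through the pointwise inequality $L_2^N(0){\leq}\widetilde L_1{+}\widetilde L_2$, with $(\widetilde L_1,\widetilde L_2)$ distributed as under $\pi_\beta^N$.
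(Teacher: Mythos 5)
Your argument is correct, but it takes a genuinely different and more elementary route than the paper. The paper's proof is devoted to establishing a second-moment bound, uniform in $N$, under $\widehat{\pi}_{\beta}^N$: it introduces the invariant distribution $\widetilde{\pi}_{\beta}^N$ at the failure instants of an arbitrary node, computes $\E_{\widetilde{\pi}_{\beta}^N}((\widetilde{L}^N_{0,2})^2)$ exactly from the invariance of $W_n=\sum_j(\widetilde{L}^N_{n,j})^2$, transfers the bound to $\widehat{\pi}_{\beta}^N$, propagates it in time via Lemma~\ref{LemW}, and then reruns the second-moment computation of Proposition~\ref{PoisComp}; your fallback paragraph is essentially that strategy. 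Your main argument instead exploits the conservation law $\sum_i L_i^N(s)=F_N$ to write $C_P^N(t)$ as an explicit affine functional of $\int_0^t L_1^N(s)\,\diff s$ alone, so that only a first-moment estimate on $L_1^N$ is required; that estimate comes from the closed linear Volterra equation $g_N(t)=\tfrac{F_N}{N-1}t-\tfrac{N}{N-1}\int_0^t g_N(s)\,\diff s$ obtained by taking expectations in Equation~\eqref{eq3} (the martingale term is indeed centred since the state is deterministically bounded by $F_N$ for fixed $N$), whose solution satisfies $0\leq g_N\leq F_N/N$, whence $\E(W_N(t))=O(1/N)$ and Markov's inequality concludes. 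This buys quite a lot: no second moments, no exchangeability of the coordinates $2,\dots,N$, and the same few lines would also reprove Proposition~\ref{PoisComp} itself under the sole hypothesis $\E(L_1^N(0))=o(N)$, bypassing Condition~\eqref{IniCond} and Lemmas~\ref{techprop} and~\ref{LemW} for that purpose. One cosmetic correction: the displayed relation $|C_P^N(t)-C_P^N(s)|=|W_N(t)-W_N(s)|$ should be an inequality, since $C_P^N(t)-C_P^N(s)=\tfrac{F_N}{N-1}(t-s)-(W_N(t)-W_N(s))$ with $0\leq W_N(t)-W_N(s)\leq\tfrac{F_N}{N-1}(t-s)$; the Lipschitz bound, hence the tightness argument, is unaffected.
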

\begin{proof}
  Let   $\widetilde{\pi}_{\beta}^N$ be the invariant distribution of  the process $(L_1^N(t))$ at the instants of failures on nodes, not only of node $1$. The sequence of states of the corresponding Markov chain is denoted as
  \[
  \left(\widetilde{L}^N_n\right)=\left(\widetilde{L}^N_{n,j}, 2\leq j\leq N\right)
  \]
  where $\widetilde{L}^N_{n,j}$, $2\leq j\leq N$ is the state of the nodes at the instant of the $n$th failure, i.e. the state of network reordered but with the failed node is excluded.
  If
  \[
  W_n=\left(\widetilde{L}^N_{n,2}\right)^2+\left(\widetilde{L}^N_{n,3}\right)^2+\cdots+\left(\widetilde{L}^N_{n,N}\right)^2,
  \]
  by invariance one has
  \[
  \E_{\widetilde{\pi}_{\beta}^N}(W_0)=  \E_{\widetilde{\pi}_{\beta}^N}(W_1),
  \]
  after some trite calculations, one obtains
  \[
  \E_{\widetilde{\pi}_{\beta}^N}\left(\left(\widetilde{L}^N_{0,2}\right)^2\right)=\frac{N{-}1}{N}\frac{F_N^2}{N^2}+\frac{F_N}{N}\frac{N{-}2}{N},
  \]
  hence
  \[
  \sup_{N\geq 2}   \E_{\widetilde{\pi}_{\beta}^N}\left(\left(\widetilde{L}^N_{0,2}\right)^2\right)<+\infty.
  \]
  The same property will hold when one considers only the instants of failures of node $1$ since, recall that $t_1^1$ is the first of these instants, 
  \[
  \widehat{\pi}_{\beta}^N\stackrel{\text{\rm dist.}}{=} \left(\widetilde{L}_i^N(t_1^1), i\geq2\right) \text{ if } \left(L_i^N(0)\right) \stackrel{\text{\rm dist.}}{=} {\widetilde{\pi}_{\beta}^N}. 
  \]
  By proceeding as in the proof of Lemma~\ref{techprop}, but  by stopping at time $t_1^1$ instead of a fixed time $t$, one obtains that
  \[
  \sup_{N\geq 2}\E_{  \widehat{\pi}_{\beta}^N}\left(L_2^N(0)^2\right)=  \sup_{N\geq 2}\E_{  \widetilde{\pi}_{\beta}^N}\left(\widetilde{L}_2^N(t_1^1)^2\right)<+\infty.
  \]
  Lemma~\ref{LemW} implies therefore that
  \[
\sup_{N\geq 2}\sup_{0\leq t\leq T}\E_{  \widehat{\pi}_{\beta}^N}\left(L_1^N(t)^2\right)+\E\left(L_2^N(t)^2\right)  <+\infty.
\]
One can now proceed as in the proof of Proposition~\ref{PoisComp} by noting that the crucial argument is the fact that the two last terms of the right hand side of Equation~\eqref{eqc2} vanish when $N$ gets large.
\end{proof}

\end{document}